\title{Audits Under Resource, Data, and Access Constraints: \\Scaling Laws For Less Discriminatory Alternatives}
\author{%
  Sarah H.~Cen\\
  Stanford University\\
  Palo Alto, CA 94304 \\
  \texttt{shcen@stanford.edu} \\
  \And
  Salil Goyal\\
  Stanford University\\
  Palo Alto, CA 94304 \\
  \texttt{salilg@stanford.edu}
  \And
  Zaynah Javed\\
  Stanford University\\
  Palo Alto, CA 94304 \\
  \texttt{zjaved@stanford.edu}\\
  \AND
  Ananya Karthik\\
  Stanford University\\
  Palo Alto, CA 94304 \\
  \texttt{ananya23@stanford.edu}\\
  \And 
  Percy Liang\\
  Stanford University\\
  Palo Alto, CA 94304 \\
  \texttt{pliang@cs.stanford.edu}\\
  \And 
  Daniel E. Ho\\
  Stanford University\\
  Palo Alto, CA 94304 \\
  \texttt{deho@stanford.edu}\\
}
\newif\ifarxiv
\begin{document}

\maketitle

\begin{abstract}
  AI audits play a critical role in AI accountability and safety. 
One branch of the law for which AI audits are particularly salient is anti-discrimination law. 
Several areas of anti-discrimination law (including but extending beyond employment) implicate what is 
known as the ``less discriminatory alternative'' (LDA) requirement, 
in which a protocol (e.g., model) is defensible if no less discriminatory protocol that achieves comparable performance can be found with a reasonable amount of effort. 
Notably, the burden of proving an LDA exists typically falls on the claimant (the party alleging discrimination). %
This creates a significant hurdle in AI cases, as the claimant would seemingly need to \emph{train} a less discriminatory yet high-performing model, a task requiring resources and expertise beyond most litigants. 
Moreover, developers often shield information about and access to their model and training data as trade secrets, making it difficult to reproduce a similar model from scratch.
\vspace{4pt}

In this work, we 
present a procedure enabling claimants to determine if an LDA exists, even when they have limited compute, data, information, and model access.
To illustrate our approach, we focus on the setting in which fairness is given by demographic parity and performance by binary cross-entropy loss.
As our main result, we provide a novel \emph{closed-form} upper bound for the loss-fairness Pareto frontier (PF).
This expression is powerful because the claimant can use it to fit the PF in the ``low-resource regime,'' 
then extrapolate the PF that applies to the (large) model being contested, all without training a single large model.
The expression thus serves as a \emph{scaling law} for loss-fairness PFs.
To use this scaling law, the claimant would require a small subsample of the train/test data. Then, for a given compute budget, the claimant can fit the context-specific PF by training as few as 7 (small) models.
We stress test our main result in simulations, finding that our scaling law applies even when the exact conditions of our theory do not hold.

\end{abstract}

\section{Introduction}\label{sec:introduction}

Complex AI systems are increasingly used in decision-making contexts that are subject to legal oversight.
For example, HireVue has used AI to score video interviews \cite{harwell2019hirevue};
various US agencies apply facial recognition for fraud detection \cite{Cahoo2019,Bauserman2022,reuters2024epic}; 
and insurance providers use elaborate behavioral data to price policies \cite{naic2025bigdata,naic2023life}.
In the absence of methods that test whether such systems comply with the law, decisions that rely heavily on AI are permitted to escape scrutiny,
unless one can establish wrongdoing through other means such as showing intent to harm.
Accordingly, there is growing need for AI audits.
This need is particularly pronounced as models increase in complexity and scale.
For instance, one challenge in AI audits is selecting the samples on which to evaluate an AI system. 
The number of possible samples to test increases exponentially with the input dimension, 
which presents a challenge as decision makers shift from applying AI on low-dimensional feature vectors to doing so on high-dimensional, unstructured text and video.
We henceforth refer to the system being audited as the ``model,''
noting our analysis applies broadly, e.g., to human-AI decisions.\footnote{
    We adopt this terminology to simplify our language.
    Using ``model'' to refer to the broader AI system that results from pre-training, post-training, guardrailing, etc. is increasingly common. 
    For our purposes, the most relevant question is whether our analysis applies to AI-assisted decisions in which humans are also involved, as this setting arises frequently, to which the answer is \emph{yes}. 
    One would simply use the appropriate outcome/decision in all salient calculations, e.g., when computing loss and fairness. 
}

AI audits are particularly salient in anti-discrimination law,
in which the goal is typically to determine whether a protocol discriminates on the basis of a protected attribute.
In the US, numerous laws concern discrimination, 
including the Equal Credit Opportunity Act, 
Fair Housing Act, 
Americans with Disabilities Act,
and Title VII of the Civil Rights Act.
Although these laws do not explicitly mention AI, 
models used in these decision-making contexts are subject 
to scrutiny.
There are additionally efforts to regulate automated systems directly, 
such as NYC's Local Law 144,
which requires annual bias audits of ``automated employment decision tools,'' including AI.

Several areas of anti-discrimination law implicate the \textbf{``less discriminatory alternatives'' (LDA) requirement}.
Conceptually, it holds that a protocol (e.g., AI model) should not be used if a protocol that achieves comparable performance while being less discriminatory can be found without causing ``undue hardship'' (i.e., imposing significant costs on the employer). 
The LDA requirement arose from an understanding that there may be trade-offs between equity and considerations 
such as performance and cost (cf. \Cref{sec:bg,sec:lda}). 
When such trade-offs exist, the courts may wish to understand why a discriminatory outcome occurs and thus consider equity alongside other factors.
In this way, this requirement is more flexible than thresholds (e.g., requiring that a measure of discrimination
not exceed a certain value).
For most of this work,
we ground our discussion in a well-known context involving LDAs: Title VII 
of the US Civil Rights Act.
Importantly, under Title VII, the burden of proving that an LDA exists \emph{falls on the plaintiff}.

The LDA requirement highlights a fundamental and recurring issue in AI audits that is at the heart of our work:
claimants generally have \emph{fewer resources, less expertise, and limited knowledge} about the audited model
than the defendants they challenge, and yet \emph{bear the burden of proof}. 
We refer to this phenomenon as the ``\textbf{resource-information asymmetry}'' in AI evidence production.

This asymmetry can prevent claimants from gathering enough evidence to substantiate their claims under current legal standards.
The LDA requirement drives this point home. 
Interpreted straightforwardly, 
the claimant %
is asked to \emph{provide} a less discriminatory model, 
i.e., train a model that maintains the same level of performance as state-of-the-art, production models
while simultaneously reducing discrimination. 
However, most claimants lack the compute, data, and expertise to do so.
To add to the asymmetry, 
developers fiercely protect their models, training procedures, and data as trade secrets, 
meaning that claimants wishing to train a comparable model lack critical information.
Thus, in the absence of methods enabling claimants to establish the existence of LDAs 
under this asymmetry, the LDA requirement may block their primary (sometimes, only) path to 
accountability.\footnote{
    Under Title VII, the disparate impact doctrine is often favored 
    for claims brought against automated or AI-assisted decisions.
    However, a successful disparate impact claim requires establishing the existence of 
    an LDA (as the final step), 
    meaning that the resource-information asymmetry in evidence production can block what may be the 
    only viable path a claimant has to prevail in a Title VII disparate impact case.
}

\subsection{Contributions}

\begin{figure}[tp]
    \centering 
    \begin{subfigure}[t]{0.57\textwidth}
        \centering 
        \includegraphics[width=\textwidth]{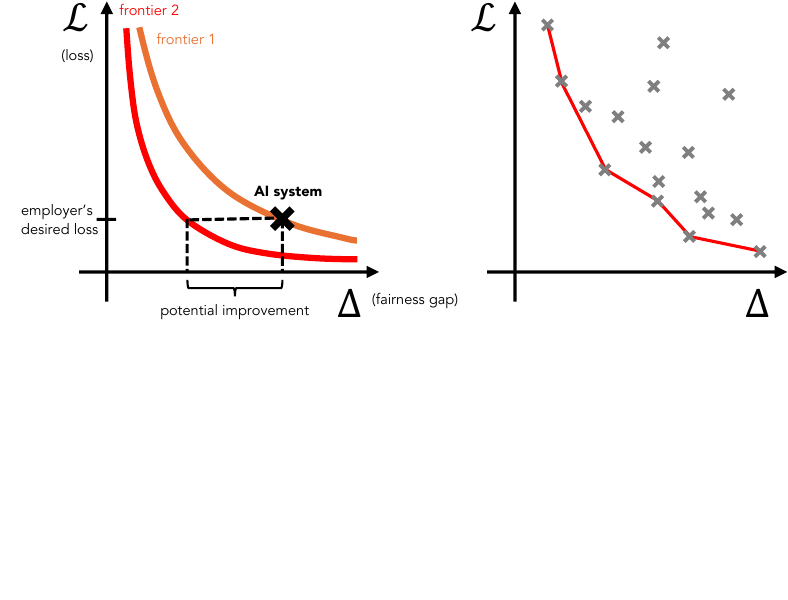}
        \caption{}
        \label{fig:PF}
    \end{subfigure}
    \hfill
    \begin{subfigure}[t]{0.415\textwidth}
        \centering 
        \includegraphics[width=\textwidth]{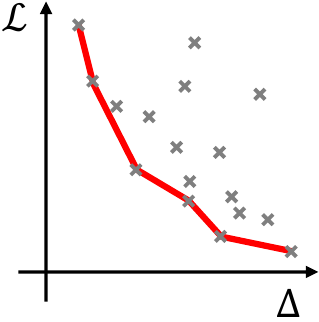}
        \caption{}
        \label{fig:PF2}
    \end{subfigure}
    \caption{\small
    \textbf{(a)} 
    Consider the contested AI model, as given by the ``X'' on the loss vs. fairness gap plane. 
    As described in \Cref{sec:lda}, there are two steps in a disparate impact case before the less discriminatory alternatives (LDA) step. 
    In the first step, the claimant establishes that there is disparate impact, typically by pointing to a statistical measure of discrimination (the fairness gap). 
    Thus, the first step identifies the x-coordinate of the model (of ``X''). 
    In the second step, the defendant counters by arguing that the disparity serves what is known as business necessity,
    demonstrating that it allows them to meet some desired performance (the y-coordinate of ``X'').
    In the third and final step, the claimant must prove that the employer could have used an LDA that would be less discriminatory while still meeting the employer's desired performance (i.e., a model that is to the left of but no higher than ``X'').
    In this work, we cast this problem as determining where the performance-fairness Pareto frontier (PF) lies.
    For example, if the PF is given by the orange curve, then the ``X'' is on the PF, meaning that no LDAs exist; 
    to decrease the fairness gap, one must sacrifice performance. 
    If the PF is given by the red curve, there is significant room to improve fairness while maintaining performance. 
    Note that, this figure abstracts away several nuances, which are further discussed in our work. 
    \textbf{(b)} However, recasting the problem to finding the PF does not immediately resolve the resource-information asymmetry. The typical approach is to train many models while inducing them to have different loss-fairness characteristics, as given by the small x's, then trace the curve connecting the points lowest and to the left, as given by the red line. The issue is: for the PF to apply, the trained models must be of comparable size to the contested model. Training such models requires resources, data, and expertise beyond most claimants. Our work provides a cost-effective way to find the PF. 
    }
    \label{fig:lda}
\end{figure}

In this work, we tackle the resource-information asymmetry, using the LDA requirement as our case study.
As is customary in disparate impact analyses, we focus on the classification setting (noting that this \emph{includes} decisions that use generative AI to classify individuals).
While our primary objective is to provide tools that equip claimants with the ability to substantiate their claims, 
we note that our approach \emph{also helps model developers} assess the existence of LDAs without expending significant resources (since our method is specifically designed to require minimal resources).

Conceptually, we first shift the problem from that of \emph{training} a less discriminatory model to showing that a less discriminatory model \emph{exists}, as explained in \Cref{sec:lda} and visualized in \Cref{fig:PF}.
We thus recast the problem to one of finding the performance-fairness Pareto frontier (PF). 
If the contested model sits far from the PF, then many LDAs exist. 
One could therefore measure how easily the defendant would be able to obtain an LDA using the distance from the PF, as we formalize in \Cref{sec:setup}.
An additional benefit of this approach is that finding the PF does not reveal information specific to the contested model.
Note that, to use PFs, one must be able to quantify performance and fairness in some way; although not always possible, this is rarely prohibitive because disparate impact analyses of AI generally involve quantitative evidence, even if no single metric is determinative.\footnote{
    Under disparate impact doctrine, the goal is to show that different groups experience disparate outcomes, and defendants can counter by arguing that this disparity is necessary for business reasons. 
    To do so, both sides generally present quantitative evidence (cf. the Four-Fifths rule), especially when the claim concerns an algorithmic/automated system.
    In fact, much of the field of algorithmic fairness relies on the ability to present quantitative evidence, where the appropriate metric is context-dependent. 
    Moreover, to establish that a protocol (such as a model) is less discriminatory, one must typically specify what constitutes ``less'' or ``more,'' making quantitative evidence more compelling. 
}

Recasting the problem in terms of PFs does not immediately resolve the resource-information asymmetry because finding the PF is a challenging problem in and of itself (see \Cref{fig:PF2}).
Existing methods for finding the PF are just as (often \emph{more}) resource-intensive than producing an LDA (see \Cref{sec:bg,sec:lda}).
Our approach transforms this task into a tractable one by finding a scaling law for the performance-fairness PF, as visualized in \Cref{fig:method}.
Given the form of the scaling law, claimants could first fit parameters of the scaling law in a low-resource environment, then extrapolate where the PF falls for the contested (large) model. 
Thus, the claimant would never need to train a large model. 
Such a scaling law would allow several types of analyses, such as 
(i) estimating how far the contested model is from the PF for models of equivalent size trained on a similar amount of data; or (ii) alternatively, estimating how many resources one would need to obtain a model with specific performance-fairness attributes.
In addition to requiring fewer resources, this approach requires minimal information about the contested model, 
bypassing barriers created by trade secrecy. 

This approach hinges on the existence of a scaling law. 
We provide, to our knowledge, the first \emph{closed-form upper bound} of a performance-fairness PF (that is also, in a sense, tight). 
We describe how this expression can be used as a scaling law that allows one to learn the PF empirically at a fraction of the cost of existing methods.
Our result uses binary cross-entropy loss as our performance metric, demographic parity as our fairness metric, and a posited data-generating process, as given in \Cref{sec:setup,sec:main_result}.
We hope to extend this result to other fairness and performance metrics in future work.
As mentioned above and described in \Cref{sec:procedure}, to apply the scaling law, 
one needs four ingredients:
(i) test data for evaluation, 
(ii) a small subsample of the training data, 
(iii) an estimate of the contested model's size, 
and 
(iv) an estimate of the amount of data used to train the contested model.
Of these, (i) should not be difficult to obtain, as both parties use test data for the two steps of disparate impact cases preceding the LDA step (see \Cref{sec:lda}).
While the defendant must provide (ii)-(iv), they are more conservative asks than the status quo and may thus help strike a balance between the plaintiff's and defendant's interests.\footnote{Straightforward approaches to finding an LDA require access to the trained model (e.g., to use it as a starting point from which one searches locally for an alternative) and/or a significant portion of its training data (e.g., to train a comparable model). Alternate methods that do not require access or data generally assume low-dimensional, categorical inputs that are not reflective of contemporary applications of complex, generative AI to unstructured, high-dimensional inputs.} 

We conclude with experiments on synthetic data to stress test the conditions of our theoretical result. 
Specifically, our result uses assumptions on the data-generating process and an assumption on the PF (that models on the PF satisfy a certain notion of symmetry, where this is \emph{not} a condition on all models, just on Pareto-optimal ones). 
We find that our scaling law holds even when the conditions of our theory do not.
We emphasize that our method is {plug-and-play}: as researchers develop better methods for finding Pareto-optimal models, our approach provides increasingly better estimates.

Our main contributions are summarized as follows:
\begin{enumerate}[left=8pt,topsep=2pt,itemsep=0pt]
    \item We cast the problem of establishing the existence of an LDA and/or estimating the cost of producing an LDA as determining the performance-fairness Pareto frontier (PF).
    \item We provide a closed-form upper bound of the performance-fairness PF that functions as a scaling law. 
    We describe how one would apply this scaling law to determine whether an LDA exists at a fraction of the cost of both training a large model.
    \item We conduct small-scale experiments to test our theoretical result.
\end{enumerate}
Finally, we note that although we study LDAs,
we believe that the need for low-resource, low-information methods to substantiate AI claims and conduct AI audits is a broad and pervasive issue.
We point out extensions to our work in \Cref{sec:conclusion} and \Cref{sec:lda_broadly}.

\section{Less Discriminatory Alternatives and Their Relation To Pareto Frontiers} \label{sec:lda}

In this section, we discuss the relation between less discriminatory alternatives (LDAs) and Pareto frontiers (PFs).
To scope our analysis, 
we situate ourselves in the context of Title VII of the US Civil Rights Act (CRA). 
Our analysis will allow us to map the LDA requirement to a formal problem statement in \Cref{sec:setup}, 
in which a claimant seeks to determine whether an LDA exists with limited resources
and limited information about the contested model.
A more extensive background and related work can be found in \Cref{sec:bg}.

\subsection{Three steps of disparate impact cases}\label{subsec:example}
In the US, Title VII of the US CRA  prohibits employment discrimination 
on the basis of race, color, religion, sex, and national origin, 
which are often referred to as ``protected attributes.''
Although there are several ways to litigate 
under Title VII, the legal theory that is typically applied to
automated decisions is the disparate impact doctrine \citep{barocas2016disparateimpact, jones2022disparateimpact}.\footnote{
    The other well-known legal theory is disparate treatment,
    which applies when a decision-maker intentionally treats an individual differently
    based on their protected class.
    In contrast, disparate impact applies when a facially neutral practice 
    has a dispropotionate effect on individuals belonging to a protected class. 
    Disparate treatment thus relies on the existence of intent 
    while disparate impact shifts the focus to the outcome or effects of an employment practice. 
    In the context of automated decision-making and AI systems,
    disparate impact is typically applied due to the difficulty of 
    ascribing intent to algorithms or models \cite{barocas2016disparateimpact}. 
    Although disparate treatment is not generally applied to automated and AI systems, 
    some have begun to contemplate its application due to the rise of ``reasoning'' models
    that seem to verbalize ``intent.''
    In addition, some believe that algorithms that use protected attributes as inputs
    should be subject to a disparate treatment analysis.
}
Courts generally adopt a three-part, burden-shifting test to adjudicate
disparate impact claims:
\begin{enumerate}[left=8pt]
    \item The plaintiff must show that a facially neutral employment practice has \emph{disparate impact} on a protected class, 
    i.e., leads to systematically different outcomes for individuals on the basis of
    a protected attribute.
    For automated systems, this is generally done by showing a statistical disparity in outcomes between different groups, which means that a quantitative notion of discrimination is adopted at this step of the case. 

    \item If the plaintiff successfully establishes disparate impact, 
    the defendant (e.g., employer or employment agency) can counter by showing 
    that the practice is for ``\emph{business necessity}'' reasons: that it is 
    demonstrably related to an important and legitimate business goal.
    Similarly to the above, a quantitative notion of the business goal is often adopted.

    \item If the defendant successfully claims business necessity, 
    the burden of proof shifts back to the plaintiff for the third and final step. 
    The plaintiff must prove that there is a \emph{less discriminatory alternative (LDA)}: a protocol that the defendant could find without undue hardship and that would serve the same employment goal as in Step 2 while causing less disparate impact. 
    Unless the plaintiff is able to meet this LDA requirement, the plaintiff loses the case.
\end{enumerate}
Our work addresses \textbf{Step 3}. 
We assume Step 1 (which is studied extensively by 
the algorithmic fairness community \citep{feldman2015disparateimpact,zliobaite2015discrimination, hardt2016discrimination,chouldechova2017fair, friedler2019fairness}) and Step 2 are complete. See \Cref{sec:bg} for further discussion.

\subsection{Example: Challenges of finding an LDA}
To illustrate the challenges involved in finding an LDA, consider the following example.
    {Consider an AI tool that uses large language and video models to screen application packages, which contain unstructured text and video interviews. 
    Suppose that the plaintiff successfully establishes that the tool disproportionately favors candidates from a specific demographic group using one or more chosen metrics, such as the demographic parity gap (Step 1). 
    Suppose further that the defendant successfully shows that the tool improves their business outcomes by reducing the amount of time to screen applicants while surfacing candidates that are well suited for each job (Step 2).}
    
    {At this point, the plaintiff can only win the case if they are able to meet the LDA requirement (Step 3).
    They meet several hurdles in their attempts:}
    \begin{enumerate}[left=8pt]

        \item  {The plaintiff first attempts to train their own LDA. 
        They quickly realize that training a comparable model to the defendant's production, state-of-the-art model requires significant compute.}
        \item {Even if they are able to obtain the necessary compute, training a model requires access to good data. 
        The plaintiff learns that collecting enough training data is prohibitively expensive, so the plaintiff requests it from the defendant (or the developer if the data is owned by a third party). The defendant and/or developer claims that sharing their training data is tantamount to releasing trade secrets and compromises user privacy, and the judge agrees.}\footnote{
            Requesting access to information about models, data, and training procedures is a complex issue. This information is often protected as proprietary. When the owner of the information is a third party that is not a direct party to the lawsuit (e.g., the plaintiff sues an employer, who outsources the AI model development to a third party), it can be even more difficult to obtain this information.
            Further discussion in \Cref{fn:past_cases}.}
        \item {The plaintiff turns to a third option. They request access to the contested model so that they can use it a starting point to locally search for an LDA by, e.g., fine-tuning or probing its internal representations. The defendant and/or developer claims that this, too, violates trade secrecy, and the judge agrees.}
        \item Finally, the plaintiff considers establishing the existence of an LDA by characterizing the range of possible classifiers, in the same spirit as analyses in \cite{gillis2024operationalizing,laufer2024fundamental}, but discovers these approaches work well on finite-dimensional, categorical inputs but not on unstructured ones.
    \end{enumerate}
    {At this point, the plaintiff may throw in the towel, conceding that they cannot meet the LDA requirement because the burden of providing an LDA is too high, especially given the limited resources, data, and model access that they possess.}\footnote{Note that there are other options that would not require the plaintiff to train their own model, such as showing that a competing product is less discriminatory but equally effective. Although such a situation would be helpful, we do not assume that there exists a competitor that meets these criteria.}

\subsection{Relation between LDAs and Pareto frontiers}

\begin{figure}[tp]
    \centering 
    \includegraphics[width=\textwidth]{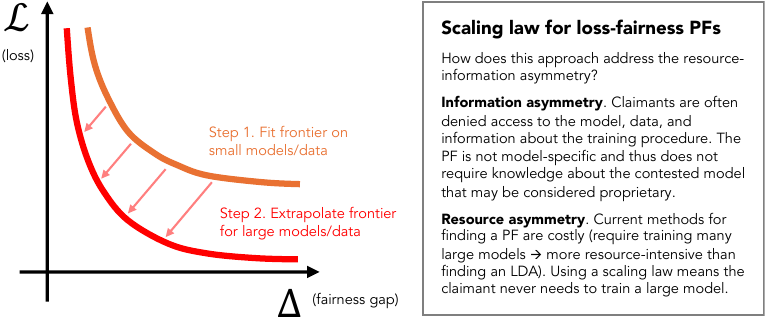}
    \vspace{4pt}
    \caption{\small
        An illustration of the proposed method on the left and how it addresses the resource-information asymmetry on the right, 
        as explained in \Cref{sec:lda}.
    }
    \label{fig:method}
\end{figure}

The example above highlights the difficulties of providing an LDA given the resource-information asymmetry. 
However, the LDA requirement does not necessarily imply that the plaintiff
must train a new model.
Instead, the plaintiff can prove, to a sufficient standard, that such an alternative \emph{exists}.\footnote{We cannot say for certain how courts will receive this evidence. If courts believe it meets the evidentiary standard, then the plaintiffs can satisfy the LDA requirement without actually training an LDA. Even if courts do not believe it fully satisfies the LDA requirement, it can be used to as an intermediate step that supports the plaintiff's requests for further discovery, including data and model access that they may be initially denied (see points 2 and 3 in \Cref{subsec:example}).\label{fn:intermediate_step}}

In Figure \ref{fig:PF},
we illustrate how proving the availability and feasibility of an LDA can be cast as determining that
a given model lies far from the performance-fairness PF. 
Specifically, when the defendant's model does not lie on the PF,
then there exists a model that is less discriminatory but equally effective, 
where the distance from the PF is related to the
``hardship'' of finding the LDA, 
as formalized in the following section.
Intuitively, under the mild assumption that the Pareto frontier ``improves'' as the cost/effort budget increases, 
then distance from the PF characterizes the amount of cost/effort needed to find an LDA with certain performance-fairness characteristics. 
Recasting the LDA requirement as a problem of finding the PF helps to address the \textbf{information asymmetry} problem: proving an LDA's availability and feasibility does not hinge on specific knowledge that is considered the defendant's or developer's proprietary information.\footnote{It is important to note that, while similar to what is described in point 4 of \Cref{subsec:example}, our approach is distinct in that it works on high-dimensional, unstructured inputs.}

However, this recasting does not immediately resolve the \textbf{resource asymmetry}. 
In fact, the general method for finding the PF is to 
train a wide array of models with varying performance-fairness characteristics, 
plot them on the performance-fairness plane, and then trace out the empirical PF \citep{feldman2015disparateimpact, zafar2017fairness, agarwal2018reductions}, 
as illustrated in \Cref{fig:PF2}.
From a resource perspective, this is \emph{more} difficult than training a single large model---it involves training many!
Our main technical contribution, as outlined in \Cref{fig:method}, is a scaling law that greatly reduces the resource cost of finding the PF.

\section{Formal Problem Statement}\label{sec:setup}

In the remainder of this work, 
we demonstrate a method for proving the existence of a less discriminatory alternative (LDA) in a low-resource, 
low-information setting. 

\subsection{Setup}
Consider a binary classification setting, 
where random variables $X \in \cX$, 
$A \in \{0, 1\}$, 
and $Y \in \{0, 1 \}$
denote the features, 
(binary) sensitive attribute, 
and (binary) class of an individual. 
Notably, $X$ may be unstructured.
For instance, 
$X$ could denote a job applicant's resume, 
$A$ their sex, 
and $Y$ their suitability for the job of interest.
Throughout, we use the convention that capital letters represent random variables, and their lower-case analogs denote specific values that they can take.
Let $f : \cX \times \{0, 1\} \rightarrow [0, 1]$ denote a (soft) classifier that takes in features $x$ and sensitive attribute $a$ and returns a score $f(x, a)$ between 0 and 1. 
Note that having $f$ depend on $a$ is without loss of generality, 
as $f$ could ignore it, so we use this $A$-``aware'' 
setup.
We use $\mathcal{F}$ to denote the model class (e.g., neural network of size $N$) that $f$ belongs to 
and $\mathcal{D}$ to denote $f$'s training dataset.

In this work, 
we use (expected) binary cross-entropy (BCE) loss as our measure of performance. 
Let the BCE of $f$ with respect to a (fixed) joint distribution $p_{X, A, Y}$ be given by
\begin{align}
    \cL(p_{X, A, Y}, f) \defeq - \bbE_{x, a, y \sim p_{X, A, Y}} [ y \log(f(x, a)) + (1 - y) \log(1 - f(x, a)) ] . \label{eq:BCE}
\end{align}
We use demographic parity as our notion of fairness. 
Thus, the fairness gap of $f$ with respect to a conditional distribution $p_{X | A}$ is given by
\begin{align*}
    \Delta(p_{X | A }, f) \defeq \left|
        \bbE_{x \sim p_{X | A = 1}} [
            f(x, 1)
        ]
        - \bbE_{x \sim p_{X | A = 0}} [
            f(x, 0)
        ]
    \right| .
\end{align*}
We refer to $p$ as the ``test'' distribution. 
Note that the plaintiff may choose to use other notions of performance and fairness. 
In this work, we use BCE and demographic parity, leaving an analysis of other definitions to future work.\footnote{
    Importantly, recall from \Cref{sec:lda} that the plaintiff chooses their 
    definition of disparate impact in Step 1 and the defendant chooses their definition of business 
    necessity/legitimate employment goal in Step 2. 
    By the time the plaintiff arrives at Step 3, 
    they can inherit the definitions chosen in Steps 1 and 2, since presumably the parties of the case and judge have reached agreement on these definitions in order to arrive at Step 3. 
}
The ``right'' choice of fairness metric is generally context-dependent and highly debated. 
History indicates that demographic parity is a likely choice by courts, 
as selection rates (the backbone of the demographic parity metric) have appeared across multiple contexts, including the Four-Fifths Rule from 1978 to NYC's Local Law 144 of 2021. Moreover, selection rates do not suffer from the same selection bias issues as other metrics (see \Cref{sec:bg} for further discussion).
Similarly, we study BCE loss because it is widely used as the performance metric for classifiers and is precisely what developers/companies optimize in training.

\subsection{Our objective: Establishing existence of an LDA with limited information and resources}

\emph{Limited resources.}
Suppose the plaintiff's model belongs to a model class $\mathcal{F}^{-}$ and is trained on a dataset $\mathcal{D}^{-}$, which are much smaller than the model class $\mathcal{F}^{+}$ and training dataset $\mathcal{D}^{+}$ available to the defendant. 
Let $N^{+}$ and $D^{+}$ denote the parameter count of $\cF^{+}$ and number of samples in $\mathcal{D}^{+}$.
Similarly, let $N^{-}$ and $D^{-}$ denote the parameter count of $\cF^{-}$ and number of samples in $\mathcal{D}^{-}$.
Suppose that the claimant can only train models that belong to $\cF^{-}$ and $\mathcal{D}^{-}$, 
where $N^{+} \gg N^{-}$ and $D^{+} \gg D^{-}$.
We assume that $\mathcal{D}^{-}$ is drawn from the same distribution as $\mathcal{D}^{+}$.

\emph{Limited information.}
In past and ongoing cases, judges and defendants have been reluctant to share information
about the model and training data, arguing that this information is 
proprietary and confidential.\footnote{In some cases, the developer is actually 
a third party that is not a direct party to the lawsuit. This can make it even more difficult to obtain potentially proprietary information about the model, data, and training procedures. 
}
In this work, we assume the claimant is similarly denied access to the contested model
but can request a limited amount of information. 
We propose that, since the LDA requirement asks that the claimant assess what alternatives are ``feasible'' or ``reasonable'' for a defendant to obtain, the claimant should minimally be given information about how many resources were used to train the contested model: 
namely, the values $N^{+}$ and $D^{+}$. 
We further propose that the claimant should be granted a small subsample of training and test data. 
In the notation given above, the subsample of training data would be of size $D^{-}$, 
and the subsample of test data at most $D^{-}$.\footnote{\label{fn:past_cases}
    This amount of information is more conservative than what has been requested in 
    past cases \citep{HarvLawRev2017Loomis,ozkul2023} and what practitioners currently advocate for \citep{casper2024black,cen2024transparency}. 
    As such, although one may argue that any information about the model and 
    data used to train it is propriety/confidential, 
    we believe that this level of access is modest and, in the absence of any information, 
    claimants are powerless.
}

\emph{Objective: Show existence of a (feasible) LDA.}
Our objective is to provide a method that the claimant can use to determine 
whether the contested model is at least $\delta$-far %
from the Pareto frontier (PF) for a fixed performance level (i.e., loss), which would imply that finding an LDA is $\delta$-feasible.

Formally, let the defendant's contested model be denoted by $\hf^*$.
Our objective is to provide a method that the claimant can use 
to determine whether, for a given value $\delta > 0$,
there exists a model $\smash{\hf}$ of size $N^{+}$ and trained on a dataset of size $D^{+}$ such that
\begin{align}
    \cL(p_{X, A, Y}, \hf ) \leq \cL(p_{X, A, Y}, \hf^* ) 
    \quad \text{and} \quad 
    \Delta(p_{X | A}, \hf ) < \Delta(p_{X | A}, \hf^* ) - \delta
    . \label{eq:lda}
\end{align}
In other words, there exists a model that is at least as performant as the contested model
and reduces the contested model's fairness gap by at least $\delta > 0$. 
The larger the value of $\delta$, the greater the distance between the contested model 
and the PF.

This distance is indicative of the ``feasibility'' of finding an LDA: 
under the mild assumption that the loss-fairness PF monotonically improves as $N$ and $D$ increase (which our result and experiments support), 
large values of $\delta$ indicate that the defendant could have found an LDA with a small fraction of the resources they expended to train $\smash{\hf^*}$.
Our scaling law will allow us to estimate the minimal $N$ and $D$ needed to achieve certain loss-fairness characteristics.

\section{Main Result: Scaling Law for Loss-Fairness Pareto Frontier}\label{sec:main_result}

In this section, 
we derive a scaling law for the loss-fairness Pareto frontier (PF) under a given data generating process (DGP). 
We then describe how one applies this scaling law.
All proofs are given in \Cref{app:key_thm_proof,app:main_result_proof}.

\subsection{Intermediate result}

We begin with an intermediate result that characterizes BCE loss for any DGP. 
A reader interested only in our main result (the scaling law) may wish to skip ahead to \Cref{sec:thm2}.
We present this intermediate result for two reasons. 
First, it lends intuition for how our main result is obtained. 
Second, it does not depend on a specific definition of fairness or the DGP;
it can therefore be used as a building block for future works that examine different notions of fairness and DGPs.

Before presenting the result, we introduce some notation. 
Let $p$ and $q$ denote joint distributions over random variables $X$, $A$, and $Y$. 
Let $p(1 \mid x, a)$ and $q(1 \mid x, a)$ denote the Bayes optimal classifiers under
$p$ and $q$, respectively.
Let $\text{KL}(\cdot , \cdot)$ denote the Kullback-Liebler divergence.
The result below simply decomposes the loss of a classifier $\hf$ on a test distribution $p$ into three components. 
For reasons that will become clear in the following section, we state this result in terms of $p$ and an auxiliary distribution $q$.
\begin{lemma}\label{thm:key_thm}
    Consider a classifier $\hf: \cX \times \cA \rightarrow [0, 1]$. Consider arbitrary joint distributions $p$ and $q$ over $X$, $A$, and $Y$, 
    where $p_{X, A, Y} \ll q_{X, A, Y}$.
    Let $S(x, a, y) \defeq y \log ( {q(1 \,|\, x, a)}  / {\hf(x, a)} )  + (1 - y) \log ( (1 - q(1 \,|\, x, a)) / (1 - {\hf}(x, a)) )$
    be absolutely integrable. 
    We assume $\bbE_{p_{X,A,Y}} [S(x, a, y)] = \bbE_{q_{X, A, Y}} [S(x, a, y)]$.
    Then,
    \begin{equation}
    \begin{split}
        \cL(p_{X, A, Y}, \hf)
        &= 
                (\cL(q_{XAY}, \hf) - \cL(q_{XAY}, q_{1|X,A}) ) \\
            &\hspace{40pt}+ \bbE_{x, a \sim p_{X, A}} \left[ \KL{ p_{Y | x, a}}{q_{Y \mid x, a} } \right]
            + \cL(p_{X, A, Y}, p_{1|X,A}) 
        .
    \end{split} \label{eq:key_thm}
    \end{equation}
\end{lemma}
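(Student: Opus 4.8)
The plan is to read the binary cross-entropy loss as a negative log-likelihood under the Bernoulli model induced by a soft classifier, and then rewrite it twice: first by inserting the auxiliary distribution $q$ through a telescoping of logarithms, and then by using the stated invariance hypothesis to move one of the resulting terms from an expectation under $p$ to an expectation under $q$. For a soft classifier $g : \cX \times \cA \to [0,1]$, write $g_{Y \mid x, a}$ for the Bernoulli law on $\{0,1\}$ with mean $g(x,a)$, so that $\cL(r_{X, A, Y}, g) = - \bbE_{x,a,y \sim r}[\log g_{Y \mid x, a}(y)]$ for every joint law $r$ over $(X, A, Y)$. Note that $p_{Y \mid x, a}$, the conditional law of $Y$ given $(X,A) = (x,a)$ under $p$, is exactly the Bernoulli law induced by the Bayes classifier $p_{1 \mid X, A}$, so $\cL(p_{X, A, Y}, p_{1 \mid X, A}) = - \bbE_p[\log p_{Y \mid x, a}(y)]$ is the $p$-averaged conditional entropy of $Y$ given $(X,A)$; analogously for $q$.

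First I would insert $q$. Since $p_{X,A,Y} \ll q_{X,A,Y}$ (which makes $\log( q_{Y \mid x, a}(y) / \hf_{Y \mid x, a}(y))$ well defined $p$-almost surely), write $\log \hf_{Y \mid x, a}(y) = (\log \hf_{Y \mid x, a}(y) - \log q_{Y \mid x, a}(y)) + \log q_{Y \mid x, a}(y)$ inside $\cL(p_{X, A, Y}, \hf) = -\bbE_p[\log \hf_{Y \mid x, a}(y)]$. Because $S(x,a,y) = \log q_{Y \mid x, a}(y) - \log \hf_{Y \mid x, a}(y)$ by its definition, this gives $\cL(p_{X, A, Y}, \hf) = \bbE_p[S(x,a,y)] - \bbE_p[\log q_{Y \mid x, a}(y)]$. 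For the second term I would apply the standard cross-entropy decomposition: conditioning on $(X,A)$, for each fixed $(x,a)$ the quantity $-\bbE_{y \sim p_{Y \mid x, a}}[\log q_{Y \mid x, a}(y)]$ is the cross-entropy between the Bernoullis $p_{Y \mid x, a}$ and $q_{Y \mid x, a}$, hence equals the conditional entropy of $p_{Y \mid x, a}$ plus $\KL{p_{Y \mid x, a}}{q_{Y \mid x, a}}$; averaging over $(x,a) \sim p_{X, A}$ and recognizing the averaged conditional entropy as $\cL(p_{X, A, Y}, p_{1 \mid X, A})$ yields $-\bbE_p[\log q_{Y \mid x, a}(y)] = \cL(p_{X, A, Y}, p_{1 \mid X, A}) + \bbE_{x,a \sim p_{X, A}}[\KL{p_{Y \mid x, a}}{q_{Y \mid x, a}}]$, which supplies the last two terms of \eqref{eq:key_thm}.

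It then remains to identify $\bbE_p[S]$ with the bracketed excess-loss term. Here I would invoke the hypothesis $\bbE_{p_{X,A,Y}}[S] = \bbE_{q_{X,A,Y}}[S]$, and then use the absolute integrability of $S$ to split $\bbE_q[S(x,a,y)] = \bbE_q[\log q_{Y \mid x, a}(y)] - \bbE_q[\log \hf_{Y \mid x, a}(y)] = - \cL(q_{X, A, Y}, q_{1 \mid X, A}) + \cL(q_{X, A, Y}, \hf)$, using the Bernoulli-model reading of the two loss terms. Substituting this together with the decomposition of $-\bbE_p[\log q_{Y \mid x, a}(y)]$ back into $\cL(p_{X, A, Y}, \hf) = \bbE_p[S] - \bbE_p[\log q_{Y \mid x, a}(y)]$ gives exactly \eqref{eq:key_thm}.

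I do not expect a substantive obstacle; the proof is essentially bookkeeping. The only non-mechanical ingredient is the invariance hypothesis $\bbE_p[S] = \bbE_q[S]$, which is precisely the hinge that lets the discrepancy between $\hf$ and $q$ be expressed as an excess loss measured under $q$ rather than under $p$ — without it, the $\bbE_p[S]$ term would not collapse to a clean function of $\hf$ alone. The remaining care points are purely measure-theoretic: absolute continuity $p \ll q$ is needed so the log-ratio (and hence $S$) is defined $p$-a.s., and the absolute integrability of $S$ is needed to split the expectations $\bbE_p[S]$ and $\bbE_q[S]$ into differences of the individually finite loss terms that appear in the statement.
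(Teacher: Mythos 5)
Your proof is correct and follows essentially the same telescoping-through-$q$ idea as the paper, just more economically. Where the paper adds and subtracts four auxiliary terms and then invokes an importance-ratio/covariance lemma (\Cref{lem:w_uncorr_KL}) to show that $\cL(p,\hf)-\cL(q,\hf)$ and $\cL(q,q_{1|X,A})-\cL(p,q_{1|X,A})$ cancel, you recognize that $\cL(p,\hf)=\bbE_p[S]+\cL(p,q_{1|X,A})$ directly, substitute $\bbE_p[S]=\bbE_q[S]$ from the hypothesis, and read off $\bbE_q[S]=\cL(q,\hf)-\cL(q,q_{1|X,A})$ — which delivers the same three-term decomposition without the intermediate scaffolding.
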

\paragraph{Interpretation.}
This result decomposes the loss of an arbitrary classifier $\hf$ on a test distribution $p$ into three components:
(1) the misspecification loss due to the choice of model class $\mathcal{F}$;
(2) the loss due to distribution shift between $p$ and $q$; and
(3) the irreducible test loss given by the BCE loss of the Bayes optimal classifier $p_{1 | X, A}$ on $p_{X, A, Y}$, where
$p_{1 | X, A}$ denotes the classifier that returns $p_{1 | X, A}(x, a)$ on $(x, a)$. 
This result follows from a simple telescoping sum that utilizes the definitions of information theoretic terms and the stated condition involving $S$.

Why is this useful? Recall that our goal is to derive a scaling law of the loss-fairness PF. 
That is, our goal is to produce a closed-form expression of loss, which is the left-hand side of \eqref{eq:key_thm}, in terms of the fairness gap of $\smash{\hf}$ on $p$.
\Cref{thm:key_thm} gets us part of the way there; the next step is to write the right-hand side of \eqref{eq:key_thm} in terms of the chosen fairness gap for a given DGP.

\paragraph{Understanding the condition on $\smash{\hf}$.}
We briefly comment on the condition $\bbE_{p_{X,A,Y}} [S(x, a, y)] = \bbE_{q_{X, A, Y}} [S(x, a, y)]$, which is a condition on $\smash{\hf}$. 
We observe that $S$ is the log likelihood ratio between Bernoulli models $q(1 \mid x, a)$ and $\smash{\hf}(x, a)$, 
such that $\bbE_{q_{X, A, Y}} [S(x, a, y)]$ is the expected conditional KL divergence between them.
In the next section, \Cref{asm:stronger_symmetry} implicitly asks that this condition hold across all $q$'s that we consider,
which implies that $\bbE_{q_{X, A, Y}} [S(x, a, y)]$,
i.e., that the ``distance'' (as given by an expected log likelihood ratio) between $\smash{\hf}$ and the Bayes optimal $q(1 \mid x, a)$ is constant.

When would we expect this to hold true and is it restrictive?
One might expect this to hold true if $\smash{\hf}$ is the model that results from training on $q$.
If $\smash{\hf}$ is the result of training on $q$, then the condition asks that the training procedure and model class result in an $\smash{\hf}$ that replicates the patterns in its train distribution equally well, regardless of train distribution.
In practice, we do not expect this condition to hold exactly; 
the training procedure and model class may fit some $q$'s better than others, e.g., if there exists a $q$ that is trivial to learn. 
However, we believe this is a reasonable condition for large models trained on sufficiently sophisticated tasks because deep learning models are designed to serve as universal function approximators across different distributions. 

Importantly, we emphasize that $q$ is \emph{not} the true train distribution. 
As explained in the following section, $q$ is an artificial, auxiliary distribution that gives rise to our main proof technique.

\subsection{Main result: Closed-form upper bound of the loss-fairness Pareto frontier}\label{sec:thm2}

Recall that a Pareto frontier (PF) is traced out by the lowest-loss classifier for every fairness gap value $\underline{\Delta}$.
In this section, we present our main result: a closed-form upper bound of the loss-fairness PF. 
We begin with an assumption.

\begin{assumption}[Model misspecification symmetry]\label{asm:stronger_symmetry}
    For a given distribution $q$, let $\smash{\hf}$ be the loss-minimizing classifier in model class $\mathcal{F}$ when training on a dataset $\mathcal{D}$ drawn from $q$. 
    We assume that $\cL(q_{XAY}, \smash{\hf}) - \cL(q_{XAY}, q_{1|X,A})$ 
    is a constant $c_1(\mathcal{F}, D)$, %
    that does not depend on $q$ or $\smash{\hf}$.
    Moreover, 
    $\bbE [
                    \smash{\hf}(x, A) | A = 0
                ]
                -
                \bbE [
                q(1 | x, A) | A = 0
                ]
        = 
                \bbE [
                    \smash{\hf}(x, A) | A = 1
                ]
                - \bbE [
                    q(1 | x, A) | A = 1
                ]$,
                where all expectations are taken with respect to the test distribution $p_{X | A}$.    
\end{assumption}
This assumption essentially implies that each Pareto-optimal model $\smash{\hf}$
that belongs to model class $\cF$ is symmetric with respect 
to the DGP, which one would expect to 
hold for large models that serve as universal function approximators. %
That is, the ``distance'' between the loss-minimizing $\smash{\hf}$ in $\mathcal{F}$ learned on $q$ and the Bayes optimal classifier $q(1 | x, a)$ may depend on the model class $\mathcal{F}$ and the number of samples $D$ it is trained on, but not on $q$. 
As discussed in the previous section, this may not always hold exactly (e.g., when some $q$'s are trivial to learn), but we believe it to be reasonable for deep learning models trained on sufficiently sophisticated tasks. 
The second condition in \Cref{asm:stronger_symmetry} implies that $\smash{\hf}$ 
is symmetric with respect to $A$. 
This condition does \emph{not} imply that $\smash{\hf}$ fits group $A = 0$ 
as well as the other group $A = 1$, but that the ``distance'' between $\smash{\hf}$ and $q(1\mid x , a)$ is symmetric with respect to $A$; to see this, observe that it ``compares'' 
$\smash{\hf}$ to the Bayes optimal classifier for $q$, which itself may be skewed.\footnote{
    The second condition can be loosened to be
    equality with a constant shift, but we remove this for 
    simplicity. It will introduce additive constants throughout our proof that are ultimately incorporated into the constants and thus do not affect the main result.
}

We now present our main result. 

\begin{theorem}\label{thm:main_result}
    Consider the following data generating process:
    $A \sim \text{Ber}(\pi)$, $X \perp A$, and $Y | X, A \sim \text{Ber}(\sigma(g(X) - \zeta A))$
    for $\zeta \geq 0$,
    where $\sigma(\cdot)$ is the sigmoid function,
    $g$ is a measurable function, 
    and $q$ is the joint distribution induced by some $\zeta$.
    With slight abuse of notation, let $\underline{\Delta} \defeq \Delta(p_{X \mid A}, \hf)$ denote the fairness gap of $\hat{f}$ under test distribution $p$.
    We assume $p = q^{\zeta_p}$ for a fixed $\zeta^p$
    and all Pareto-optimal $\smash{\hf}$ satisfy \Cref{asm:stronger_symmetry} for some $q^\zeta$.
    Then, for all $\smash{\hf}$ on the Pareto frontier,
    \begin{align}\label{eq:main_result}
        \cL(p_{X, A, Y}, \hf)
        &\leq 
        B(\cF, D) 
        - c \cdot c' \log (1 - c'' + \underline{\Delta}  ) 
        - c \cdot (1-c') \log (c'' - \underline{\Delta}) 
        \\
        &\hspace{30pt} 
            + {{(c'' - \underline{\Delta}) (1 - c'' + \underline{\Delta} )}}
            \left(
                \frac{c \cdot c'}{2(1 - c'' + \underline{\Delta})^2}
                + 
                \frac{c \cdot (1-c')}{2(c'' - \underline{\Delta})^2}
            \right)
            \nonumber 
            + \varepsilon 
        ,
    \end{align}
    where $\varepsilon = \mathcal{O}\left( \max_\zeta \bbE_{p_{X \mid A = 1}} \left[(\smash{q_{Y \mid X,A}^\zeta}(1 \mid x, 1) - \smash{\bar{q}^\zeta_1})^3 \right] \right)$ and
     ${\bar{q}^\zeta_1} \defeq \bbE_{p_{X|A = 1}} \left[ {\smash{q^\zeta_{Y | X, A}}(1 | x,  1)} \right]$.
    $B(\cF, D)$ is a constant that depends on the model class $\cF$, dataset size $D$ and $p$,
    and $c, c', c'' \in [0,1]$ are constants that depend only on $p$.
\end{theorem}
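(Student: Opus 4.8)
The plan is to chain Lemma~\ref{thm:key_thm} together with \Cref{asm:stronger_symmetry} and the structure of the posited DGP so that the loss collapses to a single one-dimensional integral over $g(X)$, which can then be Taylor-expanded. Concretely, fix a Pareto-optimal $\hf$ and let $\zeta$ be the parameter for which it satisfies \Cref{asm:stronger_symmetry}. Apply Lemma~\ref{thm:key_thm} with test distribution $p=q^{\zeta_p}$ and auxiliary distribution $q=q^{\zeta}$; note that the first equality in \Cref{asm:stronger_symmetry} is exactly the hypothesis $\bbE_p[S]=\bbE_q[S]$ that the lemma needs. The misspecification bracket of \eqref{eq:key_thm} then becomes the constant $c_1(\cF,D)$, and the irreducible term $\cL(p_{X,A,Y},p_{1\mid X,A})$ depends only on $p$; both are folded into $B(\cF,D)$. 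What remains is the single $\underline{\Delta}$-dependent quantity $\bbE_{x,a\sim p_{X,A}}\big[\KL{p_{Y\mid x,a}}{q_{Y\mid x,a}}\big]$.

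Next I would use the DGP to collapse this to one dimension and tie $\zeta$ to $\underline{\Delta}$. Because $X\perp A$ and $q^{\zeta}(1\mid x,0)=\sigma(g(x))$ does not depend on $\zeta$, the $A=0$ slice of the distribution-shift term vanishes identically, leaving $\pi\,\bbE_{x\sim p_X}\!\big[\KL{\text{Ber}(\sigma(g(x)-\zeta_p))}{\text{Ber}(\sigma(g(x)-\zeta))}\big]$. Then the second part of \Cref{asm:stronger_symmetry} makes the additive shifts cancel inside the fairness gap, forcing the gap of $\hf$ to equal the gap of the Bayes classifier for $q^\zeta$, i.e. $\underline{\Delta}=c''-\bar q^{\zeta}_1$ with $c''\defeq\bbE_{p_X}[\sigma(g(x))]$ a $p$-constant. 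This pins $\bar q^{\zeta}_1=c''-\underline{\Delta}$ and hence determines $\zeta$ as a function of $\underline{\Delta}$.

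The crux is the last step. Write $u(x)=\sigma(g(x)-\zeta_p)$, $v(x)=\sigma(g(x)-\zeta)$, $\bar u=\bar q^{\zeta_p}_1$, $\bar v=\bar q^\zeta_1=c''-\underline{\Delta}$, and expand $\KL{\text{Ber}(u(x))}{\text{Ber}(v(x))}$ in its second argument about $\bar v$ to second order. The order-$0$ term integrates to $-\bar u\log\bar v-(1-\bar u)\log(1-\bar v)$ plus a $p$-constant (the binary entropy of $u$); with $c\defeq\pi$ and $c'\defeq1-\bar u$ this is exactly $-cc'\log(1-c''+\underline{\Delta})-c(1-c')\log(c''-\underline{\Delta})$. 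The order-$1$ term integrates to $-\mathrm{Cov}_{p_X}(u,v)/(\bar v(1-\bar v))$, which is $\le0$ since $u$ and $v$ are both increasing in $g(x)$, so dropping it preserves the $\le$ direction. The order-$2$ term is a curvature weight times $\mathrm{Var}_{p_X}(v)$; replacing the pointwise weights $u(x),1-u(x)$ by $\bar u,1-\bar u$ and bounding $\mathrm{Var}_{p_X}(v)\le\bar v(1-\bar v)=(c''-\underline{\Delta})(1-c''+\underline{\Delta})$ produces the displayed correction term. The discarded and approximated pieces (the third-order Taylor remainder, the weight-replacement error, and the variance slack) are collected into $\varepsilon$, and one verifies they are $\mathcal{O}\big(\max_\zeta\bbE_{p_{X\mid A=1}}[(q^\zeta(1\mid x,1)-\bar q^\zeta_1)^3]\big)$. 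Folding $c_1(\cF,D)$, $\cL(p_{X,A,Y},p_{1\mid X,A})$, $\pi\bbE_{p_X}[H_b(u(x))]$, and the remaining $\underline{\Delta}$-free constants into $B(\cF,D)$ yields \eqref{eq:main_result}.

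I expect the main obstacle to be precisely this error control in the last step: one must show the remainder is of the stated third-moment order \emph{uniformly} over the attainable range of $\zeta$ (equivalently of $\underline{\Delta}$), and that the logarithms stay well-defined, i.e. $\bar v=c''-\underline{\Delta}$ remains in $(0,1)$, without hidden constants blowing up as $\underline{\Delta}\to0$ or $\underline{\Delta}\to c''$. The sign/comonotonicity argument that lets us discard the covariance term while keeping an upper bound, and the deliberate choice of the variance bound $\mathrm{Var}\le\bar v(1-\bar v)$ in place of the exact variance, are what turn the expansion into a genuinely closed-form expression.
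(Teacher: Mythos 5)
Your proposal follows essentially the paper's route: use Lemma~\ref{thm:key_thm} with $p=q^{\zeta_p}$, $q=q^\zeta$ (the assumption's first part supplying the $\bbE_p[S]=\bbE_q[S]$ condition and the constant $c_1(\cF,D)$), note that the $A=0$ slice contributes only $\zeta$-independent constants, tie $\underline{\Delta}=\bar q_0-\bar q_1$ via the assumption's second part, expand the remaining log terms to second order about the mean, drop a sign-definite covariance via comonotonicity of $\sigma(g(\cdot)-\zeta_p)$ and $\sigma(g(\cdot)-\zeta)$, and bound the variance by $\bar v(1-\bar v)$. The one place you diverge from the paper is in the Taylor step: the paper first factors $\bbE[(1-p)\log(1-q^\zeta)]=\bbE[1-p]\,\bbE[\log(1-q^\zeta)]+\text{Cov}$ and drops the covariance (its Lemma~\ref{lem:cov_nonneg}) \emph{before} applying its scalar expansion Lemma~\ref{lem:bounded_E_log} to $\bbE[\log(1-q^\zeta)]$; this keeps the quadratic correction as a clean $\text{Var}(q^\zeta)$ weighted by the deterministic $1-c_5(p)$. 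You instead Taylor-expand pointwise first, which produces a quadratic term weighted by $u(x)$ and $1-u(x)$ rather than their means, and you then \emph{replace} those weights by $\bar u$, $1-\bar u$, introducing an extra error $\bbE[(u-\bar u)(v-\bar v)^2]$ that the paper's ordering never incurs. You flag this error as something ``one verifies'' is of the stated third-moment order, but that verification is exactly where your route has an open gap that the paper's route avoids entirely; the paper's only remainder comes from Lemma~\ref{lem:bounded_E_log} and is already of the form $\bbE[(q^\zeta(1\mid x,1)-\bar q^\zeta_1)^3]$. Your other concern, that the logarithms stay in-domain and the hidden constants behave as $\underline{\Delta}$ approaches the boundary, is real but shared with the paper, which writes $\varepsilon$ with a $\max_\zeta$ without tightening those constants either.
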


\paragraph{Interpretation.}
This result expresses the loss of each Pareto-optimal $\smash{\hf}$ on $p$ as a function of $\smash{\hf}$'s fairness gap $\Delta(p_{X \mid A}, \hf)$ on $p$. 
The dependence of loss on fairness gap is fully characterized, 
except for four constants $B(\cF, D)$, $c$, $c'$, and $c''$.
By ``constants,''
we mean that these terms do not depend on $\Delta(p_{X \mid A}, \hf)$.
The constants $c$, $c'$, and $c''$ depend only on $p$. 
$B(\cF, D)$ depends on $p$, the model class $\cF$, and dataset size $D$.
We describe how this result can be used as a scaling law in \Cref{sec:procedure}.

Finally, we note that although our result is specific to the DGP given in \Cref{thm:main_result}, 
we hope it provides a template for future analyses. 
Even so, the DGP above is fairly general: that $A$ is Bernoulli is without loss of generality; and
$X \perp A$ does not hold in general but should not affect an analysis of Pareto-optimal classifiers.
The distribution of $Y | X, A$ is only restrictive in that the contribution of $A$ is additive and moderated by a constant $\zeta$.

\paragraph{Key proof intuition.}
One might try to obtain the PF by solving the constrained minimization
$\smash{\hf}_{\underline{\Delta}} = \arg \min_{f \in \mathcal{F}} \cL(p_{X, A, Y}, f)$ subject to $\Delta(p_{X, A, Y}, f) = \underline{\Delta}$,
then computing $\cL(p_{X, A, Y}, \smash{\hf}_{\underline{\Delta}})$  for all $\underline{\Delta}$. 
However, this problem is highly difficult to solve analytically and generally requires strong assumptions on the model class $\mathcal{F}$. 
Alternatively, one may wish to iterate over all possible $\smash{\hf}$ and compute their corresponding loss and fairness gap values (akin to \cite{bertsimas2011price,gillis2024operationalizing}), 
but this approach is only feasible for finite-dimensional, categorical features where the number of possible classifiers is small \cite{laufer2024fundamental}; 
this approach does not scale for large models with high-dimensional, unstructured inputs.

To address these issues, we use a technique that often motivates in-processing methods and fair representation learning \citep{kamiran2012rebalance,feldman2015disparateimpact,calmon2017discrimination}:
that one can induce a fairness gap $\underline{\Delta}$ by constructing an \emph{artificial train distribution} $q \neq p$. 
In other words, we can transform the constrained minimization problem $\min_{f \in \mathcal{F}} \cL(p_{X, A, Y}, f)$ subject to $\Delta(p_{X, A, Y}, f) = \underline{\Delta}$ into an unconstrained problem $\min_{f \in \mathcal{F}} \cL(q_{X, A, Y}, f)$, 
where $q$ ``tilts'' $p$ so that the solution ${\hf}$ to the latter problem has the desired fairness gap $\Delta(p_{X, A, Y}, \smash{\hf}) = \underline{\Delta}$.
Note that ``tilting'' $q$ does \emph{not} have any bearing on what distribution the developer actually uses to train their model---$q$ is an abstraction.\footnote{
    We further emphasize that, although this discussion seems to suggest that we are restricting the allowable train distribution, 
    that is not the case. We are simply imposing an \emph{effective} train distribution as a proof technique 
    to analytically characterize the PF.
} 
To provide further intuition, recall that there is a known equivalence between minimizing an objective function subject to a constraint 
and minimizing the Lagrangian. 
Our approach leverages another notion of duality where minimizing $\mathbb{E}_p[\text{loss}]$ subject to a fairness constraint is equivalent to minimizing $\mathbb{E}_q[\text{loss}]$ for some $q$.

This change of measure allows us to characterize a Pareto-optimal classifer $\smash{\hf}$ for a given fairness gap $\underline{\Delta}$ in terms of the Bayes optimal classifer $q(1 \mid x, a)$ for the $q$ used to induce $\underline{\Delta}$.
Then, we apply \Cref{thm:key_thm}, where \Cref{asm:stronger_symmetry} maps to the condition on $S$ in \Cref{thm:key_thm}. 
In summary, our approach alleviates the two challenges described above:
for (i), we use the Bayes optimal classifier on the effective train distribution $q$
to simulate the constrained loss-minimization problem that is difficult to solve analytically;
for (ii), we avoid imposing strong assumptions on the model class $\mathcal{F}$ by using a milder model symmetry assumption.

\section{Applying the Scaling Law to the LDA Requirement}\label{sec:procedure}

\Cref{thm:main_result} can be viewed as a scaling law.
Indeed, it provides a closed-form expression for the loss-fairness Pareto frontier (PF) that involves three constants that do not depend on $\cF$ and $D$, then one constant $B(\cF, D)$ that does.
This implies that finding the PF for a large model class $\cF^{+}$ and large dataset size $D^{+}$ involves two steps. 
First, fit the unknown constants using (a smaller) model class $\cF^{-}$ and a (smaller) dataset size $D^{-}$.
Second, given an understanding of how $B(\cF, D)$ changes with $\cF$ and $D$, one can plug $\cF^{+}$ and $D^{+}$ into $B(\cF^{+}, D^{+})$ to extrapolate the PF for $\cF^{+}$ and $D^{+}$.
Luckily, past works provide a template for the form of $B$; 
specifically, previous works such as \citep{hoffmann2022training,bahri2024explaining} show that the minimum loss scales with ${\Theta}(1/{N}^\alpha + 1/{D}^\beta)$ for some $\alpha, \beta > 0$.
Because $B$ is a linearly additive constant in \Cref{thm:main_result}, 
this implies that $B$ also scales at the same rate. 

Before presenting the procedure in detail, 
we note that the existence of a closed-form expression for the PF does \emph{not} mean that all loss-fairness PFs look the same. 
In fact, the shape of the PF can vary significantly depending on the values of the constants, 
as we show in \Cref{sec:simulations} (\Cref{fig:simulations_vary_C,fig:simulations_vary_C_prime,fig:simulations_vary_C_double_prime}).
The constants are \emph{context-dependent}, 
and thus the shape of the PF is also \emph{context-dependent}.

\paragraph{Procedure.}
Given training data, test data, and a compute budget as well as an estimate of $N^{+}$ and $D^{+}$, 
as described in \Cref{sec:setup},
a claimant would apply the scaling law as follows:
\begin{enumerate}[left=8pt]
    \item Given a small model class ${N}^{-}$ and small training dataset $D^{-}$, empirically trace out the loss-fairness PF for ${N}^{-}$ and $D^{-}$.
    To do so, train a set of models, each of which is obtained by minimizing the BCE loss on $\mathcal{D}^{-}$  plus a regularizer that encourages demographic parity. 
    Varying the weight on the regularizer induces different fairness gaps.
    For each trained model, compute its loss and fairness gap on the given test data. 
    Thus, each model marks a point on the loss vs. fairness gap plane. 
    The lower convex hull of these points forms the empirical PF. 
    The number of trained models as well as ${N}^{-}$ and $D^{-}$ should be chosen based on one's compute budget.

    \item Repeat this for different values of ${N}^{-}$ and $D^{-}$, as permitted by one's compute budget. 
    
    \item Using these experiments, fit the constants $C_1$ through $C_7$ to each empirical PF using the following scaling law:
    \begin{align*}
        \text{loss}(\underline{\Delta})
        &= 
        C_1 + C_2 / (N^{-})^{C_3} + C_2 / (D^{-})^{C_4}
        \\
        &\hspace{30pt} 
        - C_5 \cdot C_6 \log (1 - C_7 + \underline{\Delta}  ) 
        - C_5 \cdot (1-C_6) \log (C_7 - \underline{\Delta}) 
        \\
        &\hspace{30pt} 
        + {{(C_7 - \underline{\Delta}) (1 - C_7 + \underline{\Delta} )}}
            \left(
                \frac{C_5 \cdot C_6}{2(1 - C_7 + \underline{\Delta})^2}
                + 
                \frac{C_5 \cdot (1-C_6)}{2(C_7 - \underline{\Delta})^2}
            \right)
        ,
    \end{align*}
    with appropriate values for $N^{-}$ and $D^{-}$.
    Note that $C_3$ and $C_4$ are often set to $0.5$ \citep{hoffmann2022training,bahri2024explaining}.

    \item Once estimates of $C_1$ through $C_7$ have been obtained, one can extrapolate the PF for the contested model by using the same functional form as above, except substituting $N^{+}$ and $D^{+}$ for $N^{-}$ and $D^{-}$.
    
    \item Alternatively, if one is given a specific loss-fairness pair $(\text{loss}(\underline{\Delta}), \underline{\Delta})$, one can use the functional form above to determine the values of $N^{+}$ and $D^{+}$ such that $(\text{loss}(\underline{\Delta}), \underline{\Delta})$ lies on the PF. This would tell the claimant how many resources the defendant needs to achieve that pair of values.\footnote{Returning to \Cref{sec:setup}, this analysis relies on the mild assumption that the PF improves monotonically as $N$ and $D$ increase. By our result (\Cref{thm:main_result}), this holds true because $B(\cF, D)$, which is the only term that depends on the model class and dataset size, decreases as $N$ and $D$ increase.}
\end{enumerate}
Given that there are only 7 constants to fit, one could fit the scaling law by training as few as 7 small, high-quality models. 
In practice, the more models, the better (see discussion of Step 1 below).

\paragraph{Note on implementation.}
We note that our approach ``fails gracefully'' in that it is \emph{conservative}: if a claimant finds the contested model is $\delta$-far from the PF, then it is \emph{at least} that far.
There are two reasons why our approach is conservative. 
First, the empirical PF is always conservative by definition; there may be (and almost always are) models that Pareto dominate the models one finds empirically.
Second, the expression in \Cref{thm:main_result} is an inequality; it gives an upper bound on the loss of a Pareto-optimal model for a given fairness gap.

\paragraph{Considerations for Step 1.}
There are multiple ways to trace out the PF empirically. For example, what we describe above is known as linear scalarization. Many works explore other methods (cf. \Cref{sec:bg}).
There are two main ways that the choice of method affects the claimant's results:
(1) some methods are better at finding \emph{Pareto-optimal} models across a wide range of fairness gaps,
and (2) some methods find them more \emph{efficiently} without having to train many models.
Thus, a good method will help the claimant get as close to the true PF as possible (which can only help the claimant increase the gap between the contested model and the PF and thus strengthen their case) while training as few models as possible (and thus using as few resources as possible).
Our scaling law is plug-and-play: as the methods for finding Pareto-optimal models efficiently improve, so too does our procedure.

\paragraph{Considerations for Step 2.}
Although not strictly necessary, re-running Step 1 on different values of $N^{-}$ and $D^{-}$ will generally improve the estimates of $C_1$ through $C_7$.
There is a trade-off here: for a fixed compute budget, one can either run Step 1 multiple times with different $(N^{-}, D^{-})$ values,
or run it once with a $N^{-}$ and $D^{-}$ that are as close to $N^{+}$ and $D^{+}$ as possible.
One should plan accordingly based on one's compute budget.

\paragraph{Considerations for Step 3.}
One may find that there are multiple sets of constants that fit the empirical PF well, 
as we explore in \Cref{sec:experiments} and \Cref{app:experiment}.
One could choose the set of constants that minimize the distance between the empirical PF and the fitted PF. 
Unintuitively, this might not be the appropriate choice. 
Recalling that the PF marks the lowest loss for each fairness gap,
no observed point can lie below the true PF. 
Thus, one may wish to use the PF that is as close to the empirical one as possible while lying entirely below it. 

Finally, we note that letting $B$ scale with $N^{-\alpha} + D^{-\beta}$ is well supported by previous works on large models (language models, in particular). 
As discussed in \Cref{sec:introduction}, our work is intended for large models, 
as this is where the LDA requirement imposes the greatest burden. 
One may wish to adjust the form for $B$ as appropriate.

\section{Experiments}\label{sec:experiments}

In this section, 
we run experiments on synthetic data to test our main result, \Cref{thm:main_result}, 
and stress test its assumptions. 
Our experiments also demonstrate how one might apply \Cref{thm:main_result} in practice.

\subsection{Results \#1: Visualizing the closed-form Pareto frontier}\label{sec:simulations}

\begin{figure}[t]
    \centering
    \includegraphics[width=\linewidth]{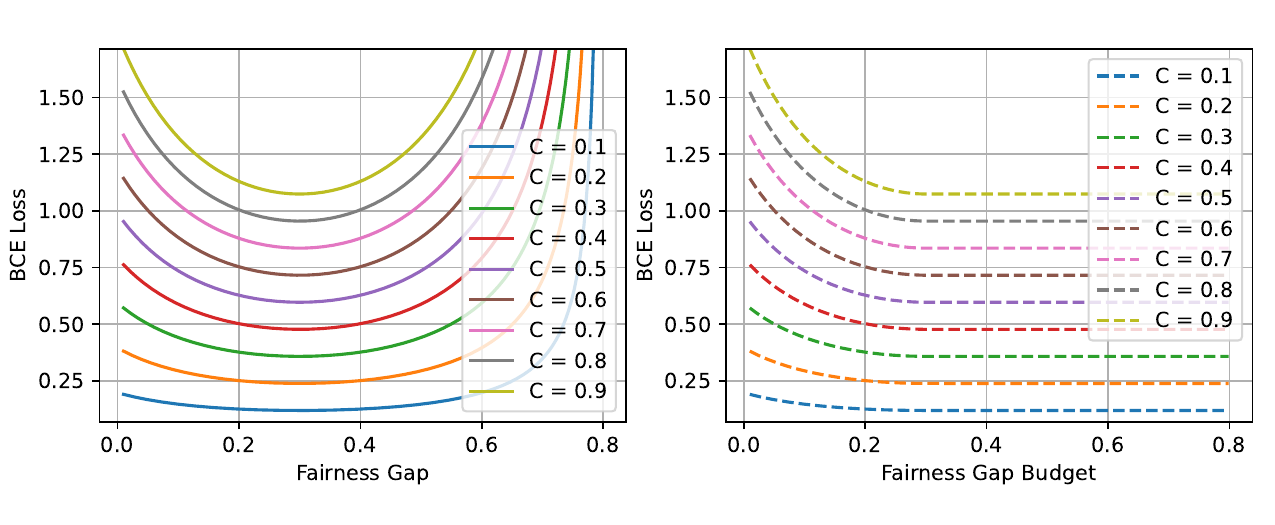}
    \caption{\small Simulations showing the shape of the closed-form expression for the Pareto frontier given in \Cref{thm:main_result} for fixed values of $c' = 0.5$ and
    $c'' = 0.8$ while varying $c$.
    The left plot shows the precise closed-form, 
    which is the lowest achievable loss among classifiers that have \emph{exactly} the  fairness gap value on the x-axis.
    The right plot shows lowest achievable loss among classifiers that satisfy the fairness gap \emph{budget} on the x-axis. }
    \label{fig:simulations_vary_C}
\end{figure}

\begin{figure}[t]
    \centering
    \includegraphics[width=\linewidth]{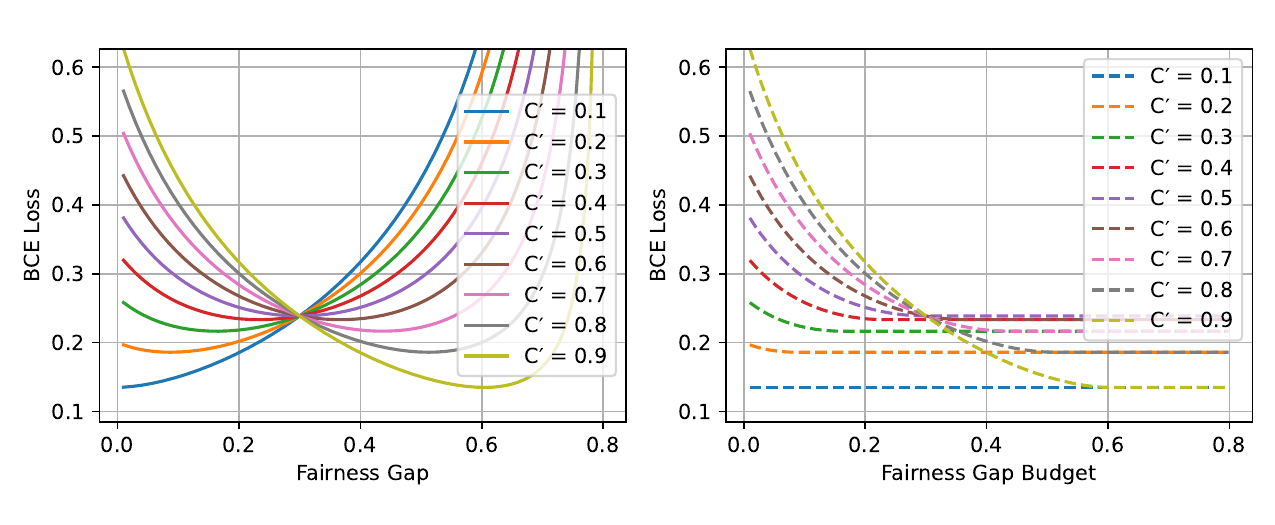}
    \caption{\small Simulations showing the shape of the closed-form expression for the Pareto frontier given in \Cref{thm:main_result} for fixed values of $c = 0.2$ and
    $c'' = 0.8$ while varying $c'$. The left plot shows the precise closed-form,
    which is the lowest achievable loss among classifiers that have \emph{exactly} the fairness gap value on the x-axis.
    The right plot shows lowest achievable loss among classifiers that satisfy the fairness gap \emph{budget} on the x-axis. }
    \label{fig:simulations_vary_C_prime}
\end{figure}

\begin{figure}[t]
    \centering
    \includegraphics[width=\linewidth]{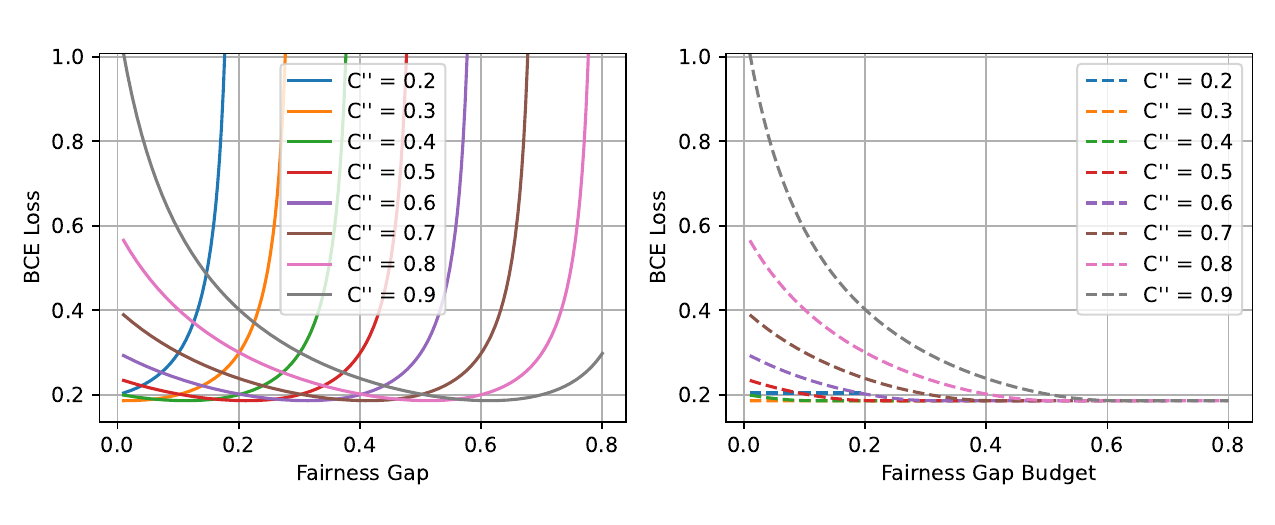}
    \caption{\small Simulations showing the shape of the closed-form expression for the Pareto frontier given in \Cref{thm:main_result} for fixed values of $c = 0.2$ and
    $c' = 0.8$ while varying $c''$. The left plot shows the precise closed-form,
    which is the lowest achievable loss among classifiers that have \emph{exactly} the fairness gap value on the x-axis.
    The right plot shows lowest achievable loss among classifiers that satisfy the fairness gap \emph{budget} on the x-axis. }
    \label{fig:simulations_vary_C_double_prime}
\end{figure}

We begin by visualizing the behavior of the \eqref{eq:main_result}.
Recall that \eqref{eq:main_result} expresses loss
as a function of the fairness gap $\underline{\Delta}$, 
where there are four constants: $B$, $c$, $c'$, and $c''$. 
In \Cref{fig:simulations_vary_C,fig:simulations_vary_C_prime,fig:simulations_vary_C_double_prime}, 
we show how the \emph{shape} of \eqref{eq:main_result} changes for different values of each constant, 
while keeping the other constants fixed and letting $B, \varepsilon = 0$. 
We provide further figures in \Cref{app:experiment}.

On the left, we plot the theoretically predicted Pareto frontier (PF) that is traced out by finding the lowest-loss classifier for an \emph{exact} fairness gap $\underline{\Delta}$.
Specifically, we evaluate the expression in \Cref{thm:main_result}, 
disregarding $\varepsilon$,
over a range of $\underline{\Delta}$ values (typically from 0 to 1).
As one might expect, the loss decreases as $\underline{\Delta}$ increases, then increases
when a model is forced to have large fairness gap $\underline{\Delta}$. 

To visualize the lowest loss achievable at or below a given fairness gap \emph{budget} (rather than a fairness gap), 
    we repeat the same simulation as in the left plot, 
    then take the minimum loss value to the left of and including the current x-value (i.e., of the current fairness gap). 
Therefore, compared to the plot on the left, 
the plot on the right is, by definition, non-increasing.
Intuitively, this gives the typical form of the PF by plotting the lowest achievable loss among classifiers with fairness gaps at or below the fairness gap budget given by the $x$-value.

We observe that increasing $c$ shifts the PF upwards while also increasing its curvature; 
changing $c'$ tilts the PF;
and increasing $c''$ moves the PF to the right.

\subsection{Results \#2: Experiments on synthetic data demonstrating scaling law}

We also provide a set of experiments on synthetic data, designed to probe the robustness of our theoretical result and demonstrate how the scaling law would be applied. 
We find that the PFs predicted by our theory closely matches the empirical PFs,
validating our scaling law approach to assessing the existence of a less discriminatory model (LDA).

These experiments allow us to manipulate key aspects of the data generating process and training setup to evaluate how well the assumptions and predictions of our theory hold up in practice.
These experiments stress test the conditions of \Cref{thm:main_result} in three critical ways:
\begin{enumerate}[left=8pt]
    \item We relax the assumption on the data generating process in \Cref{thm:main_result} so that $X \not\perp A$.

    \item Recall that one of our main proof techniques is to transform the constrained minimization problem into an unconstrained one by creating an artificial train distribution $q$ (that does not actually exist). Doing so allows us to ``tune'' the fairness gap, then find the loss-minimizing model. In our experiments, we test whether this approach is representative of the true loss-minimizing models at various fairness gap values. 
    Our experiments use a standard training procedure with a fairness regularization term and linear scalarization to trace out the PF. 

    \item Recall that our second proof strategy is to use a Bayes optimal estimator for an artificial  distribution to approximate a Pareto-optimal model.
    This approach further relies on \Cref{asm:stronger_symmetry}.
    We test this approach and the corresponding assumption in our experiments. 
    We trace out the empirical PF by training a collection of models on the synthetic train data, using the lower convex hull of the resulting $($loss, fairness$)$ characteristics as the empirical PF. 
    Thus, our experiments test whether the approach we take to derive our analytic result holds empirically. 
\end{enumerate}
In short, our experiments stress test our theoretical result by relaxing a condition on the data generating process and checking
whether our main assumption and proof techniques hold in practice.

\paragraph{Synthetic data generation.}
We generated a synthetic dataset with 10,000 samples, 
where each sample is a 20-dimensional vector with binary group labels.
Each $x$ is sampled i.i.d. from a standard isotropic multivariate Gaussian distribution.
The group label $a \in \{0, 1\}$ is deterministically assigned by linearly projecting the 
first two dimensions of $x$ onto a scalar, then thresholding, where the linear weights and bias are randomly sampled from a multivariate Gaussian, and the threshold is chosen so that 
a fixed fraction of the population belongs to Group 1 (i.e., $p(A = 1)$ is the desired value).
Because $A$ depends on $X$, they are not independent, as required in \Cref{thm:main_result};
our experiments therefore test whether the closed-form PF holds despite this relaxation.

The target variable $y \in \{0, 1\}$  
is sampled from a Bernoulli distribution with probability 
\( \mathbb{P}(Y = 1 \mid X, A) = \sigma(g(X) - \zeta A) \), where \( g: \mathbb{R}^{20} \to \mathbb{R} \) 
is a fixed, randomly initialized 2-layer MLP with a hidden layer of size 32 applied to \( X \),
\( \zeta \geq 0 \), and \( \sigma(\cdot) \) denotes the sigmoid function.
It thus mirrors the data generating process in \Cref{thm:main_result}, 
except that $X$ and $A$ are not independent.
We show results for various values of $\zeta$.
Finally, our experiments use the same distribution for the train and test sets.

\paragraph{Obtaining the empirical PF.}
We obtain the empirical PF by training models with varying loss-fairness characteristics.
To do so, we use linear scalarization, i.e., we set the loss to be the sum of the BCE loss and the fairness gap $\mathcal{L}_{\text{total}} = \mathcal{L}_{\text{BCE}} + \lambda \cdot \mathcal{L}_{\text{DP}}$ and vary $\lambda$.
As discussed at the end of \Cref{sec:procedure}, there are many methods for surfacing Pareto-optimal models across a wide range of fairness gap values; 
the better the  method, the more accurate the empirical PF. We use linear scalarization, as it is a popular and simple method.

As in our theoretical result, 
we use demographic parity as our fairness notion.
We vary $\lambda$ across 100 values between -3 and 5, with a higher density of values in the -3 to 0 portion.\footnote{Although $\lambda$'s are usually positive, we use negative values to test whether the PF curves  upward for large fairness gaps as predicted by our theory.}
For every scalarization value, we run 3 trials, 
then trace out the empirical PF (with mild smoothing and averaging).
Further training details are given in \Cref{app:experiment}.
Note that there are more advanced methods for obtaining the empirical PF than linear scalarization;
the better the method is at finding Pareto-optimal models, 
(i) the fewer models need to be trained to obtain the PF, 
and (ii) the more accurate the PF.

To get the empirical PF, we compute each trained model's BCE loss and fairness gap on the held-out test data. 
We plot these $($loss, fairness$)$ pairs on the BCE loss vs. demographic parity gap plane.
To trace the empirical PF (the lowest achieved loss for each given demographic parity gap), 
we connect the points forming the lower convex hull of the points.

\paragraph{Testing the scaling law.}
To test whether the scaling law holds, 
we demonstrate that there is a fixed of constants $C_1$ through $C_7$ such that the closed-form expression given in \Cref{thm:main_result} predicts the PF across different values of $N$ and $D$, 
as laid out in \Cref{sec:procedure}.
We focus on the effect of model scaling,
leaving $D$ constant
and only varying the model size $N$.
We then use the procedure described in \Cref{sec:procedure} to ``scale'' the curve.
\emph{If the empirical PFs are close to the predicted curves, it would validate our method: that training small models can be used to approximate the PF of larger models.}
In \Cref{fig:fit1} and similar plots in \Cref{app:experiment},
the dashed lines give the empirical PFs, and the solid lines give the ones predicted by our theory. 

\paragraph{Results.}
Our findings  corroborate our theoretical result:
First, from the y-intercepts alone, 
we verify that $B(\cF, D)$ nearly perfectly scales proportionally to 
$N^{-\alpha} + D^{-\beta}$ plus an additive constant, as predicted. 
Second, the empirical PFs follow the shape as our closed-form scaling law.
Even small changes to the expression in \Cref{thm:main_result} would not yield a similarly good fit.
Perhaps somewhat surprisingly, our theory predicts that the effect of scaling via $B$ is linearly additive, and our experiments validate this prediction.
Both of these findings provide strong evidence supporting: (i) our \Cref{asm:stronger_symmetry} and (ii) the use of Bayes optimal estimators on artificially constructed train distributions to approximate Pareto-optimal models (which allows us to obtain our result with minimal assumptions on the model class and for unstructured inputs).

We note that the fitted curve mimics the empirical data well though imperfectly, which may occur for several reasons. 
The first is that the empirical PF is always imperfect since finding Pareto-optimal models is difficult, e.g., due to factors discussed in \Cref{sec:procedure} under ``Considerations for Step 1.''
The second is that \eqref{eq:main_result} is an upper bound---therefore, there is a possibility that the shape of the true PF differs slightly from the predicted one (though our result is, in some sense, tight as seen in \Cref{app:main_result_proof}).
The third is that there are many possible fits for the same empirical PF, as we show in \Cref{app:experiment}.
In practice, one may choose to fit the curve as closely as possible to the data or, noting that the PF is an idealized concept, fit the curve to lower bound the empirical points.

\begin{figure}[t]
    \centering
    \includegraphics[width=0.95\linewidth]{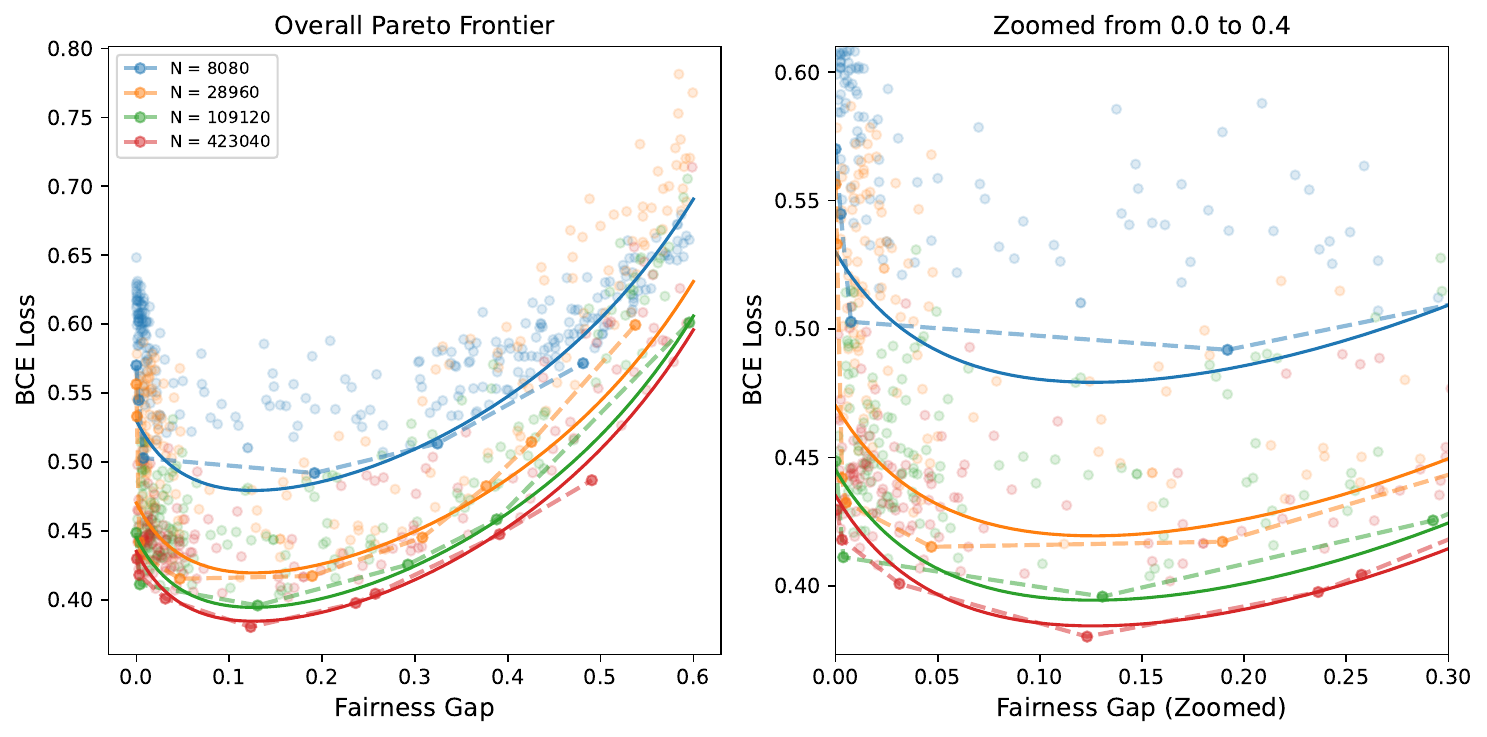}
    \caption{\small 
    Pareto frontier for different model sizes under (\( \pi = 0.2 \)) 
    and bias strength \( \zeta = 0.5 \). Each point corresponds to a trained model with a different 
    fairness regularization weight \( \lambda \). The dashed lines show the empirical Pareto frontier, created by finding the 
    lower convex hull of all the points. The solid lines show fitted curves to the points on the Pareto frontier 
    using \Cref{thm:main_result} with $c = 0.0176$, $c' = 0.92$, $c'' = 0.1424$, with bias $-0.285$, loss scaling law $55(N^{-0.7} + D^{-0.5})$.
    The left panel shows the frontier across the range of $\Delta$, 
    while the right panel zooms in on \( \Delta \in [0, 0.3] \).
    The fitted curve mimics the empirical data well though it is imperfect. 
    We found that there were many possible fits, depending on the precise choice.
    We show another possible fit in \Cref{fig:fit2} below.
}
    \label{fig:fit1}
\end{figure}

\section{Background and Related Work}\label{sec:bg}

\noindent \textbf{AI audits.}
The study of AI audits has grown rapidly in the past few years 
\citep{raji2020closing,metaxa2021auditing,goodman2022algorithmic,raji2022outsider}.
Although the range of methods and objectives for audits varies greatly, 
many agree that audits serve an important function in AI accountability, 
whether conducted by civil society groups, private actors, or regulators 
\citep{brown2021algorithm,lam2023sociotechnical,ojewale2025towards}.
Broadly defined, 
AI audits refer to the class of methods by which AI systems are evaluated and tested.
Audits can, for instance, be used to assess compliance with the law 
or to test a model's performance with respect to the model developer's claims.

Researchers have identified multiple challenges to conducting AI audits 
\citep{goodman2022algorithmic}, 
including the limited access granted to third-party auditors \citep{solaiman2023gradient,cen2024transparency,casper2024black},
conflicts of interest that can arise due to financial ties between auditors and auditees \cite{costanza2022audits,terzis2024law},
poor researcher protections \citep{longpre2024safe},
and difficulties in verifying the veracity of self-reported information \citep{tang2023privacy,franzese2024oath,waiwitlikhit2024trustless}.
Among these issues, 
our work seeks to address the difficulties of producing (convincing) evidence
in resource-, data-, and information-constrained settings. 
Specifically, 
the auditor (or claimant) is often at a disadvantage when producing evidence that 
surpasses the
relevant (context-dependent) standard  in that 
(i) they typically
possess fewer resources than the other party (e.g., model developer), 
(ii) they have limited if any access to the system they seek to audit, 
and (iii) they may lack technical expertise or domain knowledge. %

\noindent \textbf{Disparate impact doctrine.}
In the US, disparate impact is a legal doctrine used to demonstrate that a facially neutral practice or practice discriminates against a protected group by showing that it has a disproportionately negative impact on them, even if there is no intent to discriminate \cite{blumrosen1972strangers}. 
It is often contrasted with disparate treatment, which concerns intentional discrimination \cite{barocas2016disparateimpact}.
Under disparate impact doctrine, plaintiffs typically show that a practice has disparate impact by providing statistical evidence.
This doctrine was one of the main drivers behind the field of algorithmic fairness and the development of outcome-based fairness metrics \cite{hardt2016discrimination}. 
Such fairness metrics generally study classification, e.g., decisions to interview or not, hire or not, promote or not.
For example, one way to measure quantify discrimination is to compare the rates at which different groups are selected (interviewed, hired, promoted), often referred to as ``selection rates.''
The absolute difference in selection rates is often referred to as the ``demographic parity gap.''

From a legal perspective, there is precedent for assessing disparate impact using a guideline known as the Four-Fifths Rule, which states that a practice is has disparate impact if the selection rate for a protected group is less than four-fifths of the selection rate for the group with the highest rate.
Although the use of selection rates to assess discrimination is highly debated (as are fairness metrics, more broadly), selection rates are often used in disparate impact analyses because they provide a simple and intuitive way to compare group outcomes that does not suffer from a selection bias problem (that makes the estimation of true positive and false positive rates difficult).
Selection rates remain a popular choice in practice, e.g., see NYC's Local Law 144 of 2021.

\noindent \textbf{Less discriminatory alternatives (LDAs).}
The LDA requirement arose in US anti-discrimination law, 
as described further in \Cref{sec:lda}.
It has become a critical part of disparate impact doctrine following the landmark 
\emph{Griggs v. Duke Power Co.} decision in which the Supreme Court held that discrimination 
can occur even in the absence of intent (i.e., facially neutral practices can also
be discriminatory) \citep{blumrosen1972strangers}. 
Within disparate impact doctrine, 
the LDA requirement urges decision makers to consider less harmful alternatives 
among equally effective options as a structural way to assess and remove unnecessary discrimination \citep{note1974lessrestrictive}. %
We refer readers to \citet{lamber1985alternatives,note1993cra,rutherglen2006disparate} for further 
analyses of the role of LDAs in disparate impact claims.

Several scholars critically examine the efficacy of disparate impact 
doctrine and practical application of the LDA requirement.
Many observe that identifying precise and legally sufficient LDAs 
is difficult for plaintiffs \citep{note1993cra,selmi2006mistake},
and others highlight the inconsistent (and sometimes deferential) application
of the business necessity standard by courts that can  undermine LDA-based claims \citep{grover1996business}.
Some factors even discourage employers from seeking less discriminatory 
alternatives, including the risk of reverse 
discrimination lawsuits and legal uncertainty following \emph{Ricci v. DeStefano} \citep{harris2010ricci,hart2011wards}.
Together, these critiques reveal a gap between the aspirations 
of disparate impact theory and the practical barriers faced by litigants.
Our work takes the following perspective: while the LDA requirement remains intact, 
claimants increasingly need methods to demonstrate the existence of feasible alternatives.

\noindent \textbf{Less discriminatory \emph{algorithms}.}
Recent works examine the LDA requirement as it applies to algorithmic decision-making. 
\citet{gillis2024operationalizing} cast the search for LDAs as an mixed integer program, 
then directly search for an algorithm that minimizes error rate disparity subject to 
performance criteria. 
\citet{laufer2024fundamental} provide theoretical insights showing, among other takeaways, 
that identifying the least discriminatory alternative is NP-hard and may only constitute an LDA 
for the distribution or data over which it was derived. 
Our work mirrors the motivation of \citet{gillis2024operationalizing} in that we seek to provide
an operationalizable method for plaintiffs. 
However, as both \citet{gillis2024operationalizing} and \citet{laufer2024fundamental} note, 
\emph{producing} an LDA is computationally expensive and thus existing approaches do not
scale to for large models, which are increasingly important to audit. 
Our work adds to this space by providing a tool for plaintiffs with 
limited compute and data. 

From a legal perspective, \citet{black2024legal} argue that defendants 
should be required to conduct a reasonable search for LDAs before deployment,
in large part due to the resource asymmetry we discuss. 
Until new legal requirement compelling such a search arises,
our approach helps provide tooling for claimants to prove the existence 
of a reasonable LDA with limited resources.
Furthermore, we believe that even if the burden were to shift,
claimants should still be equipped to independently 
verify that a feasible LDA does not exist rather than trust the developer's claims. 

\noindent \textbf{Performance-fairness Pareto frontiers (PFs).}
In many settings 
a trade-off between fairness and performance emerges \citep{bertsimas2011price,menon2018cost,kim2020fact}.
In such cases, the performance-fairness PF characterizes the 
best fairness and performance that can be simultaneously 
achieved (i.e., the best performance than can be achieved for a given 
fairness level, or vice versa).
Various works propose methods for improving Pareto optimality or 
tracing out the performance-fairness 
PF, including
\citet{navon2020learning,ruchte2021scalable,singh2021hybrid,rothblum2021consider,kamani2021pareto,liu2022accuracy,
chzhen2022minimax,zeng2024bayes}. 
In many of these works, 
the goal is to develop efficient methods for 
tracing out the PF, but they still assume the resources needed
to train \emph{at least} one model. 
In our work, we seek to improve scalability in a different sense:
one can trace out the PF in small-model, low-data regimes
and extrapolate the PF to large-model, large-data regimes
(and thus one does not even need the resources required to train a large model). 
To the best of our knowledge, 
none of the existing works provide closed-form expressions 
for the fairness PF.

\noindent \textbf{Scaling laws.}
Scaling laws have been widely studied in the context of deep learning, 
typically used to predict how model behavior improves with increasing 
model size, data, and compute \citep{hestness2017deep,kaplan2020scaling,hoffmann2022training,bahri2024explaining}.
In this work, we seek to provide a theoretically-grounded
scaling law for the loss-fairness PF that allows one 
to fit the PF with small models and small datasets, 
then extrapolate to larger models and datasets with the help of 
the power law (that applies to loss) found in the works above. 

\section{Conclusion, Limitations, and Future Work}\label{sec:conclusion}

In this work, 
we tackle the problem of resource-information asymmetry in AI audits and legal cases involving AI. 
We focus our discussion on Title VII's ``less discriminatory alternative'' (LDA) requirement as a case study exhibiting how claimants often face a steep uphill battle in order to meet their evidentiary burden (as noted in previous works \cite{black2024legal,gillis2024operationalizing,laufer2024fundamental}).
We are motivated by this issue to develop tools and methods to reduce the hurdles claimants face. 
Our main contributions are (1) to cast the problem of finding an LDA as one of estimating the performance-fairness Pareto frontier (PF), 
(2) to provide a novel technical result that, to our knowledge, provides a first closed-form expression for the performance-fairness PF, 
and (3) to show how this result can be used as a scaling law for performance-fairness PFs that directly addresses both the resource and information asymmetry issues posed by the LDA requirement. 

Next, we identify the assumptions (both technical and conceptual) of our work, which highlight potential limitations and avenues for future work:
\begin{enumerate}[left=8pt,itemsep=0pt]
    \item Our theoretical analysis is conducted for specific notions of fairness and performance. We justify these choices in \Cref{sec:setup} and \Cref{sec:bg}, and we believe that future work tackling other definitions would be valuable. 
    Our work also relies on disparate impact and business necessity being measurable; we do not preclude the use of multiple metrics to measure fairness or performance, and studying the more complex multi-objective optimization problem in which there are more than two metrics of interest would be compelling future work. We are unsure how to address settings in which fairness and performance are not measurable, and we welcome future work that explores the LDA requirement in such settings (e.g., when fairness is ordinal).
    
    \item Our result applies for the DGP given in \Cref{sec:thm2}.
    A compelling direction for future work would be to determine how well our result holds, for real data and \emph{even} for other DGPs. Although stylized, the DGP we study still allows for significant generality. 
    We also recommend future work, both theoretical and empirical, on the form of the loss-performance PF for other DGPs.

    \item Relatedly, our main proof technique that allows us to obtain an analytic expression without strong assumptions on the inputs or the model class is to use a notion of ``duality'' that approximates Pareto-optimal classifiers with varying fairness characteristics using Bayes optimal estimators on artificially constructed training distributions. To do so, use $\zeta$ to ``tilt'' the (artificial) train distribution. As one explores other DGPs, one could also explore (i) the conditions under which this duality holds and (ii) alternative ways of tilting the training distribution. 
    
    \item \Cref{asm:stronger_symmetry} appears to hold in our experiments. Given that this assumption does not directly depend on the experimental choice of data generating process (DGP), this is strong evidence supporting it. 
    However, we suggest two directions of exploration: (i) further theoretical work to understand when this assumption may not hold, which may result in an additional term in \eqref{eq:main_result} that allows \Cref{asm:stronger_symmetry} to be removed and therefore strengthens the result; and (ii) empirical investigations to identify when this assumption breaks down in practice.
    
    \item Our result \eqref{eq:main_result} is an upper bound. Although it is tight in some sense (that it holds with equality for some $g$), we believe a compelling direction for future work is to identify a tighter bound. 
    
    \item As mentioned in \Cref{sec:procedure}, the form of $B$ that we adopt is borrowed from the literature on large language models. One could explore alternative forms of $B$ that may be more appropriate for other model classes and sizes. 
\end{enumerate}
We identify several other considerations and extensions:
\begin{enumerate}[left=8pt,itemsep=0pt]
    \item A defendant may be able to argue that, in failing to produce a specific less discriminatory alternative (LDA), this approach does not pass the necessary evidentiary standard. 
    As noted in \Cref{fn:intermediate_step}, if the court does not consider this evidence strong enough to satisfy the LDA requirement, we hope that it can be used to support the plaintiff's requests for further discovery, including data and model access that they may be initially denied.

    \item Our experimental results are limited. There are several directions for future work, including conducting experiments on real datasets, further stress testing the limitations of our theoretical results, and running experiments at greater scale. 
    \item Our approach requires access to some training and test data. 
    We feel that this requirement is unavoidable, and our contribution is to significantly  decrease the amount of data that the claimant needs. 
    However, we acknowledge that this requirement may still pose a hurdle and leave the exploration of techniques that further mitigate the data requirements to future work.

    \item The empirical PF is, in a sense, a random variable that depends on the sampled training data and the randomness of the training procedure. One could explore the confidence intervals of the estimated PF that result from different training runs. Similarly, our theoretical result does not have a notion of uncertainty; future work could explore a high-probability version of our result.
\end{enumerate}

\begin{ack}
We thank the research group of Percy Liang and the RegLab for their valuable feedback during the early stages of this project. 
This work is supported by the CSET Foundational Research Grant on audit methods for frontier models. 
\end{ack}

\bibliographystyle{abbrvnat}
\bibliography{refs.bib}

\clearpage
\appendix

\crefname{appendix}{Appendix}{Appendices}
\Crefname{appendix}{Appendix}{Appendices}

\crefalias{section}{appendix}
\crefalias{subsection}{appendix}
\crefalias{subsubsection}{appendix}

\section{LDAs Beyond Employment Law}\label{sec:lda_broadly}

   The ``less \_\_\_ alternative'' requirement expands beyond employment to areas including housing, lending, disability, and even environmental justice.
   The Department of Housing and Urban Development (HUD) has a 2023 rule returning to the 2013 Fair Housing Act standard under which actuarially sound 
     housing-insurance policies are unlawful if an equally effective, less discriminatory practice is available.
    The Consumer Financial Protection Bureau's  (CFPB) 2023 Fair Lending Report requires lenders to proactively search their credit-scoring models for LDAs even in the absence of litigation.
    A closely related duty appears in disability law; Equal Employment Opportunity Commission  (EEOC) guidance under the American Disabilities Act (ADA) obliges
    employers to adopt any ``reasonable accommodation'' that meets business needs without exclusion, effectively functioning as an LDA requirement. 
    In environmental justice, permits to proposed projects may be denied if a less harmful to human health and 
    the environment exists (an ``environmentally preferable alternative'')
    under  National Environmental Policy Act (NEPA).

Beyond the US, comparable concepts exist. 
The EU employs the principle of proportionality in discrimination cases, requiring that measures be appropriate and necessary, 
but also that the respondent show ``there is no practicable alternative.''
One may argue that the LDA requirement parallels the EU's ``data minimisation'' requirements under 
the General Data Protection Regulation (GDPR), requiring data processing to be ``limited to what is necessary.''
In Canada, the Meiorin test 
examines whether the employer has accommodated 
affected groups to the point of undue hardship. 
South Africa,
drawing from Section 36 of its Constitution, applies a limitations analysis that considers
 whether less restrictive means could achieve policy objectives without discriminatory impacts.

\section{Helpful Lemmas}

\begin{lemma}
    \label{lem:w_uncorr_KL}
   Let $\mu$ and $\nu$ be finite probability measures of the same mass on a space $\mathcal{Z}$ such that $\mu$ is absolutely continuous with respect to $\nu$, i.e., $\mu \ll \nu$. 
   Let $t(z) \defeq \frac{d\mu}{d\nu}(z)$ denote the corresponding Radon-Nikodým derivative and $R(z) \defeq t(z)-1$.
   Let $S : \mathcal{Z}  \rightarrow \mathbb{R}$ be absolutely integrable with respect to $\mu$ and $\nu$. 
   Then, $$\mathrm{Cov}_{\nu}(R, S) = 0 \iff \mathbb{E}_{\nu}[RS] = \mathbb{E}_{\nu}[R] \mathbb{E}_{\nu}[S] \iff \mathbb{E}_{\mu}[S] = \mathbb{E}_{\nu}[S].$$
\end{lemma}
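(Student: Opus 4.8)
The plan is a direct change-of-measure computation; there is no real difficulty beyond careful bookkeeping of integrability. First I would dispose of the first equivalence, which is immediate from the definition $\mathrm{Cov}_\nu(R,S) = \mathbb{E}_\nu[RS] - \mathbb{E}_\nu[R]\,\mathbb{E}_\nu[S]$, valid as soon as $R$, $S$, and $RS$ are $\nu$-integrable (this is checked below).

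Next I would record the two facts that do all the work. (i) Since $\mu$ and $\nu$ have the same total mass, $\int t\, d\nu = \mu(\mathcal{Z}) = \nu(\mathcal{Z}) = \int 1\, d\nu$, so (regardless of whether $\mathbb{E}_\nu$ is read as integration against $\nu$ or against the normalized $\nu$) we get $\mathbb{E}_\nu[R] = \mathbb{E}_\nu[t] - \mathbb{E}_\nu[1] = 0$; in particular $t$, hence $R$, is $\nu$-integrable because $\mu$ is a finite measure. Consequently the middle condition $\mathbb{E}_\nu[RS] = \mathbb{E}_\nu[R]\,\mathbb{E}_\nu[S]$ collapses to $\mathbb{E}_\nu[RS] = 0$. (ii) For the Radon--Nikod\'ym derivative one has $\int g\, d\mu = \int g\, t\, d\nu$ for every $g$ that is $\mu$-integrable (equivalently, every $g$ with $gt$ $\nu$-integrable). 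Applying this with $g = S$ --- which is $\mu$-integrable and $\nu$-integrable by hypothesis, so all the integrals below are finite --- gives $\mathbb{E}_\nu[tS] = \mathbb{E}_\mu[S]$, and shows $RS = tS - S$ is $\nu$-integrable.

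Finally I would combine these: $\mathbb{E}_\nu[RS] = \mathbb{E}_\nu[(t-1)S] = \mathbb{E}_\nu[tS] - \mathbb{E}_\nu[S] = \mathbb{E}_\mu[S] - \mathbb{E}_\nu[S]$, whence $\mathbb{E}_\nu[RS] = 0$ if and only if $\mathbb{E}_\mu[S] = \mathbb{E}_\nu[S]$, which is the third condition. The only point requiring any care --- and the closest thing to an obstacle --- is confirming that every expectation written down is well defined, i.e., that $t$, $S$, $tS$ (and hence $R$, $RS$) are all $\nu$-integrable; this follows from $\mu$ being finite (for $t$) and from the assumed absolute integrability of $S$ with respect to both $\mu$ and $\nu$ (for $S$ and $tS$), after which the change-of-measure identity applies without qualification.
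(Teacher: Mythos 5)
Your proof is correct and follows essentially the same route as the paper's: show $\mathbb{E}_\nu[R]=0$, then use the change-of-measure identity $\mathbb{E}_\nu[tS]=\mathbb{E}_\mu[S]$ to rewrite $\mathbb{E}_\nu[RS]$ as $\mathbb{E}_\mu[S]-\mathbb{E}_\nu[S]$. The only difference is that you are somewhat more explicit about checking integrability of $t$, $R$, $S$, and $RS$, which the paper leaves implicit; this is a welcome but minor addition.
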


\begin{proof}
    First, since $\mu$ and $\nu$ are probability measures and $t$ is the Radon-Nikodým derivative,
    $$\mathbb{E}_{\nu}[R] = \mathbb{E}_{\nu}[t - 1] = \mathbb{E}_{\nu}[t] - 1 = 0.$$
    Furthermore, by the definition of $R$,
    $$\mathbb{E}_{\nu}[RS] = \mathbb{E}_{\nu}[(t-1)S] = \mathbb{E}_{\nu}[tS] - \mathbb{E}_{\nu}[S] = \mathbb{E}_{\mu}[S] - \mathbb{E}_{\nu}[S].$$ 
    Therefore, 
    $$\bbE_\mu[ S ] = \bbE_\nu[ S ] \iff \bbE_\nu[RS] = 0 = \bbE_{\nu}[R] \bbE_\nu[S],$$
    where the last equality is a consequence of having established that $\mathbb{E}_{\nu}[R] = 0$.
    This completes the proof, with the first $\iff$ in the lemma statement following from the definition of covariance.
\end{proof}

\begin{lemma}\label{lem:cov_nonneg}
    Consider the setup in \Cref{sec:setup}, 
    notation in \Cref{sec:main_result}, 
    and data generating process in \Cref{thm:main_result}.
    Then, 
    $$\mathrm{Cov}_{p_{X | A = 1}}(p(1 \mid X, A = 1), q(1 \mid X, A = 1)) \geq 0.$$
\end{lemma}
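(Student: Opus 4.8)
The plan is to reduce the claim to the classical fact that two monotone functions of the same scalar random variable are non-negatively correlated. First I would unpack the data generating process of \Cref{thm:main_result} at $A = 1$: since $Y \mid X, A \sim \text{Ber}(\sigma(g(X) - \zeta A))$, the Bayes-optimal classifiers are $p(1 \mid x, 1) = \sigma(g(x) - \zeta^p)$ and $q(1 \mid x, 1) = \sigma(g(x) - \zeta)$ for the respective bias parameters $\zeta^p$ and $\zeta$, where $p$ and $q$ share the \emph{same} feature marginal and the \emph{same} link function $g$ (the DGP stipulates that they differ only through the value of $\zeta$). Because $X \perp A$, conditioning on $A = 1$ leaves the law of $X$ — and hence of the scalar $Z \defeq g(X)$ — unchanged, so the quantity of interest equals $\mathrm{Cov}\big(\phi(Z), \psi(Z)\big)$ with $\phi(z) = \sigma(z - \zeta^p)$ and $\psi(z) = \sigma(z - \zeta)$, the covariance taken under the common marginal of $Z$.

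Next I would observe that $\phi$ and $\psi$ are both non-decreasing on $\mathbb{R}$, since $\sigma$ is increasing and translating its argument preserves monotonicity. The conclusion then follows from the Chebyshev / FKG correlation inequality for monotone functions of a single real random variable, for which I would give a short self-contained proof: let $Z'$ be an independent copy of $Z$; then $\big(\phi(Z) - \phi(Z')\big)\big(\psi(Z) - \psi(Z')\big) \ge 0$ pointwise, because both factors have the same sign as $Z - Z'$; taking expectations and expanding yields $2\,\mathrm{Cov}\big(\phi(Z), \psi(Z)\big) \ge 0$. Since $\sigma$ takes values in $[0,1]$, all the expectations involved are finite, so there are no integrability concerns.

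There is no genuinely hard step here; the only points requiring care are (i) making explicit that $p$ and $q$ use the same $g$ and the same law of $X$, which is precisely what the DGP in \Cref{thm:main_result} asserts, and (ii) using $X \perp A$ to discard the conditioning on $A = 1$. Once these are in place, the monotone-correlation argument gives the result immediately.
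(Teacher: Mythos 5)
Your proposal is correct and takes essentially the same route as the paper: both proofs observe that $p(1 \mid x, 1) = \sigma(g(x) - \zeta^p)$ and $q(1 \mid x, 1) = \sigma(g(x) - \zeta)$ are non-decreasing functions of the common scalar $g(X)$ and then invoke Chebyshev's association inequality for comonotone functions. The only differences are cosmetic: you supply a short symmetrization proof of the association inequality where the paper cites \cite[Theorem 2.14]{BoucheronLugosiMassart2013}, and you invoke $X \perp A$ to drop the conditioning on $A=1$, which is harmless but unnecessary since the covariance can be taken directly under $p_{X \mid A=1}$ as the paper does.
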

\begin{proof}
    The covariance is well-defined since $p(1 \mid X, 1)$ and $q(1 \mid X, 1)$ have finite second moments by definition of $p$ and $q$.
    To show that the expression is non-negative, we appeal to Chebyshev's Association Inequality (see, e.g., \cite[Theorem 2.14]{BoucheronLugosiMassart2013}), which states that if $f$ and $h$ are real-valued functions that are monotonic in the same direction and $Z$ is a real-valued random variable, then $\mathbb{E}[f(Z) h(Z)] \geq \mathbb{E}[f(Z)] \mathbb{E}[h(Z)]$. 
    Thus, 
    $\mathrm{Cov}(i(Z), j(Z)) = \mathbb{E}[i(Z) j(Z)] - \mathbb{E}[i(Z)] \mathbb{E}[j(Z)] \geq 0$. 

    To get the final result, we map $Z$, $f$, and $h$ to quantities in our setup.
    Let the expectations be taken over $p(X | A = 1)$, 
    let $\phi(X) = p(1 | X, 1) = \sigma(g(X) - \zeta^p)$, 
    and let $\psi(X) = q(1 | X , 1) = \sigma(g(X) - \zeta)$.
    
    Next, set $U = g(X)$ and denote by $\mu = p_{X \mid A = 1} \circ g^{-1}$ the pushforward measure of $p_{X \mid A = 1}$ by $g$. Introduce the deterministic functions $f(u) = \sigma(u - \zeta^p)$ and $h(u) = \sigma(u - \zeta)$ for $u \in \mathbb{R}$. Now, since the standard logistic (sigmoid) function $\sigma$ is monotonically increasing in its argument, both $f$ and $h$ are hence nondecreasing real-valued functions. So, Chebyshev's Association Inequality applies and gives that $\mathrm{Cov}_{\mu}(f(U), h(U)) \geq 0$.
    
    Finally, because $\phi(X) = f(U)$ and $\psi(X) = h(U)$, we consequently have that
    $$\mathrm{Cov}_{p_{X | A = 1}}(\phi(X), \psi(X)) = \mathrm{Cov}_{\mu}(f(U), h(U)) \geq 0, $$
    which proves the claim.  
\end{proof}
Note that the lemma above is where the inequality in \Cref{thm:main_result} arises.
It holds with equality when $g(X)$ is almost surely constant.

\begin{lemma}
    \label{lem:bounded_E_log}
    Suppose $Z \in (0,1)$ is a random variable. Then,

    \begin{align*}
        \bbE[\log(1-Z)] &= \log(1-\bbE[Z]) - \frac{\mathrm{Var}(Z)}{2(1-\bbE[Z])^2} + \mathcal{O} \left( \bbE\left[ ( Z - \bbE[Z] )^3  \right] \right)
    \end{align*}

    and
    \begin{align*}
        \bbE[\log(Z)] &= \log(\bbE[Z]) - \frac{\mathrm{Var}(Z)}{2\bbE[Z]^2} + \mathcal{O} \left( \bbE\left[ ( Z - \bbE[Z])^3  \right] \right)
    \end{align*}
\end{lemma}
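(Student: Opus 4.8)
The plan is to prove the first identity by a second-order Taylor expansion of $z \mapsto \log(1-z)$ about the mean $\mu \defeq \bbE[Z]$, and then obtain the second identity by applying the first to the random variable $1-Z$. This reduction works because $Z \in (0,1)$ forces $1-Z \in (0,1)$ with $\bbE[1-Z] = 1-\mu$, $\mathrm{Var}(1-Z) = \mathrm{Var}(Z)$, and $\bbE[((1-Z)-(1-\mu))^3] = -\bbE[(Z-\mu)^3]$; substituting $1-Z$ into the first identity then gives $\bbE[\log Z] = \log\mu - \mathrm{Var}(Z)/(2\mu^2) + \mathcal{O}(\bbE[(Z-\mu)^3])$ directly (the sign change in the third moment is absorbed by the $\mathcal{O}$). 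So it suffices to establish the expansion for $\bbE[\log(1-Z)]$.

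For that, write $h(z) \defeq \log(1-z)$, so $h'(z) = -\tfrac{1}{1-z}$, $h''(z) = -\tfrac{1}{(1-z)^2}$, and $h'''(z) = -\tfrac{2}{(1-z)^3}$. Taylor expanding to second order about $\mu$, with a third-order remainder (Lagrange form, or the integral form $\tfrac12\int_\mu^Z h'''(t)(Z-t)^2\,dt$ to avoid measurability fuss), gives, for a point $\xi_Z$ between $Z$ and $\mu$,
\[
  h(Z) = h(\mu) + h'(\mu)(Z-\mu) + \tfrac12 h''(\mu)(Z-\mu)^2 + \tfrac16 h'''(\xi_Z)(Z-\mu)^3 .
\]
Taking expectations: the linear term vanishes since $\bbE[Z-\mu] = 0$; the quadratic term contributes $\tfrac12 h''(\mu)\,\mathrm{Var}(Z) = -\mathrm{Var}(Z)/(2(1-\mu)^2)$; and $h(\mu) = \log(1-\mu)$. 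Hence $\bbE[\log(1-Z)] = \log(1-\mu) - \mathrm{Var}(Z)/(2(1-\mu)^2) + \tfrac16\bbE[h'''(\xi_Z)(Z-\mu)^3]$, and it remains only to show the last term is $\mathcal{O}(\bbE[(Z-\mu)^3])$.

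The only real obstacle is controlling this remainder, since $h'''(z) = -2/(1-z)^3$ blows up as $z \uparrow 1$, so the implied constant is not universal. The resolution is to note that wherever this lemma is used, the underlying $Z$ is valued in a compact subinterval $[a,b] \subset (0,1)$ — in \Cref{thm:main_result} one has $Z = \sigma(g(X)-\zeta)$, which is bounded away from $0$ and $1$ whenever $g$ is essentially bounded — so $|h'''(\xi_Z)| \le 2/(1-b)^3 =: M < \infty$ almost surely, giving $\bigl|\tfrac16\bbE[h'''(\xi_Z)(Z-\mu)^3]\bigr| \le \tfrac{M}{6}\,\bbE\!\left[|Z-\mu|^3\right] = \mathcal{O}(\bbE[(Z-\mu)^3])$. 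Combining this with the previous paragraph yields the first identity, and the substitution $Z \mapsto 1-Z$ described above yields the second, completing the proof.
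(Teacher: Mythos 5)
Your proposal takes essentially the same approach as the paper's proof: a second-order Taylor expansion of $\log(1-z)$ about $\mu=\bbE[Z]$ with a Lagrange third-order remainder, taking expectations so the linear term drops out. The two small differences are both to your credit. First, the paper establishes the second identity by simply saying the expansion for $\bbE[\log Z]$ is "done similarly," whereas you derive it cleanly by the substitution $Z\mapsto 1-Z$. Second, and more substantively, you flag that the remainder $\tfrac16 h'''(\xi_Z)(Z-\mu)^3$ has $h'''(\xi_Z)=-2/(1-\xi_Z)^3$ which is not uniformly bounded on $(0,1)$, so the $\mathcal{O}(\bbE[(Z-\mu)^3])$ notation is only justified when $Z$ is confined to a compact subinterval of $(0,1)$ (as it is in the application inside \Cref{thm:main_result}). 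The paper's version glosses over this: it writes the remainder with a single $\xi_1$ pulled out of the expectation and asserts finiteness only of the moments, not of the $(1-\xi)^{-3}$ factor. Your patch is the right way to make the statement rigorous, so your write-up is the more careful of the two.
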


\begin{proof}
    Since the function $\log(1-z), z \in (0,1)$ is (at least) three times differentiable, we can compute a second order Taylor expansion about a value $\mu \in (0,1)$:
    \begin{align*}
        \log(1-z) &= \log(1-\mu) - \frac{z-\mu}{1-\mu} - \frac{(z-\mu)^2}{2(1-\mu)^2}  - \frac{1}{6} \cdot \frac{2}{(1-\xi)^3}(z-\mu)^3 \\
        &= \log(1-\mu) - \frac{z-\mu}{1-\mu} - \frac{(z-\mu)^2}{2(1-\mu)^2}  - \frac{(z-\mu)^3}{3(1-\xi)^3}
    \end{align*}

    for some $\xi$ between $\mu$ and $z$. Then, plugging in the random variable $Z$ for $z$ and $\bbE[Z]$ for $\mu$, and taking an expectation, we have
    \begin{align*}
        \bbE[\log(1-Z)] &= \log(1-\bbE[Z]) - \frac{\bbE[(Z-\bbE[Z])^2]}{2(1-\bbE[Z])^2}  - \frac{\bbE[(Z-\bbE[Z])^3]}{3 (1-\xi_1)^3}  \\
        &= \log(1-\bbE[Z]) - \frac{\mathrm{Var}[Z]}{2(1-\bbE[Z])^2} + \mathcal{O} \left( \bbE[(Z-\bbE[Z])^3] \right)
    \end{align*}

    where the linear term disappears from the first equality because its expectation is zero. Additionally, all expectations are well-defined because bounded random variables have finite moments of all orders.

    The expansion for $\bbE[\log(Z)]$ is done similarly.
\end{proof}

\begin{lemma}\label{lem:var_bound_in_terms_of_delta}
        Let $Z \in [0,1], \mu = \bbE[Z], \mu_0 \in \mathbb{R}$, and $\Gamma \in \mathbb{R}$ subject to $\mu = \mu_0 - \Gamma$. Then, 
$$\mathrm{Var}[Z] \leq (\mu_0 - \Gamma)(1-\mu_0 + \Gamma).$$
\end{lemma}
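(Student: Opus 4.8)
The plan is to recognize that the substitution $\mu = \mu_0 - \Gamma$ makes the claimed bound nothing more than the familiar inequality $\mathrm{Var}[Z] \le \mu(1-\mu)$ for a $[0,1]$-valued random variable, rewritten in terms of $\mu_0$ and $\Gamma$. So the real content is to prove $\mathrm{Var}[Z]\le \mathbb{E}[Z]\,(1-\mathbb{E}[Z])$, and then simply re-express the right-hand side.

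First I would write $\mathrm{Var}[Z] = \mathbb{E}[Z^2] - \mu^2$ with $\mu = \mathbb{E}[Z]$. Next, I would use the fact that $Z \in [0,1]$ implies $Z^2 \le Z$ pointwise, hence $\mathbb{E}[Z^2] \le \mathbb{E}[Z] = \mu$ by monotonicity of expectation; note that all moments here are finite since $Z$ is bounded, so there is no integrability issue. Combining, $\mathrm{Var}[Z] \le \mu - \mu^2 = \mu(1-\mu)$.

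Finally I would substitute $\mu = \mu_0 - \Gamma$ to conclude $\mathrm{Var}[Z] \le (\mu_0 - \Gamma)\bigl(1 - (\mu_0 - \Gamma)\bigr) = (\mu_0 - \Gamma)(1 - \mu_0 + \Gamma)$, which is exactly the stated bound. I do not anticipate any real obstacle: the only subtlety worth a sentence is that the hypothesis $\mu = \mu_0 - \Gamma$ is what ties the abstract parameters $\mu_0,\Gamma$ to the distribution of $Z$, and without it the statement would be false for arbitrary $\mu_0,\Gamma$; with it, the bound is tight (equality when $Z$ is Bernoulli$(\mu)$).
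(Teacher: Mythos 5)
Your proposal is correct and matches the paper's proof essentially verbatim: both use $Z^2 \le Z$ on $[0,1]$ to get $\mathbb{E}[Z^2]\le\mathbb{E}[Z]$, deduce $\mathrm{Var}[Z]\le\mu(1-\mu)$, and then substitute $\mu=\mu_0-\Gamma$. No differences worth noting.
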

\begin{proof}
        Since $Z \in [0,1], Z^2 \leq Z$, $\bbE[Z^2] \leq \bbE[Z]$. This implies that $$\mathrm{Var}[Z] = \bbE[Z^2] - \bbE[Z]^2 \leq \bbE[Z](1-\bbE[Z]) = \mu(1-\mu).$$

        Plugging in $\mu = \mu_0 - \Gamma$ gives the result.
\end{proof}

\section{Proof of Lemma \ref{thm:key_thm}}\label{app:key_thm_proof}

Recall that we denote the binary cross-entropy (BCE) loss $\ell$ 
of a classifier $f$ with respect to a 
sample $(x, a, y)$ by
\begin{align*}
    \ell(f, (x, a, y)) = - \left( y \log(f(x, a)) + (1 - y) \log(1 - f(x, a)) \right) .
\end{align*}
Note that $f$ can be a soft classifier that returns values in $[0, 1]$.
Further recall that we denote the expected BCE of $f$ with respect to a distribution 
$p_{X, A, Y}$ by
\begin{align*}
    \cL(p_{X, A, Y}, f) = - \bbE_{x, a, y \sim p_{X, A, Y}} [ y \log(f(x, a)) + (1 - y) \log(1 - f(x, a)) ] ,
\end{align*}
and this notation $\cL(\cdot, \cdot)$ is used analogously for 
other distributions and classifiers.

\begin{proof}
    We split the proof into four steps. 

    \paragraph{Step 1: Decomposition.}
    We can decompose the BCE loss of a classifier $\hf \in \mathcal{F}$ on a distribution $p_{X, A, Y}$ as follows
    \begin{align}
        H(p_{X, A, Y}, \hf)
        &=  \underbrace{( H(p_{X, A, Y}, \hf) - H(q_{XAY}, \hf) )}_{\text{loss of $\hf$ due to distribution shift}}  \label{eq:bce_term_1}
        \\
        &\qquad + \underbrace{(H(q_{XAY}, \hf) - H(q_{XAY}, q_{1|X,A}) )}_{\text{train loss of $\hf$ relative to Bayes optimal}}  \label{eq:bce_term_2}
        \\
        &\qquad + \underbrace{(H(q_{XAY}, q_{1|X,A}) - H(p_{X, A, Y}, q_{1|X,A}) )}_{\text{loss of Bayes optimal $q_{1 | X, A}$ due to distribution shift}} \label{eq:bce_term_3}
        \\
        &\qquad + \underbrace{(H(p_{X, A, Y}, q_{1|X,A}) - H(p_{X, A, Y}, p_{1|X,A}))}_{\text{difference in test loss of Bayes optimal classifiers}} \label{eq:bce_term_5}
        \\
        &\qquad + \underbrace{H(p_{X, A, Y}, p_{1|X,A})}_{\text{irreducible test loss}}   , \label{eq:bce_term_6}
    \end{align}
    by simply adding and subtracting terms, 
    where we slightly abuse notation to let $q_{1 | X, A}$ denote a (soft) classifier where the prediction for $(x, a)$ is given by $q_{Y | X, A}(1 | x, a)$.
    
    As written below each term, 
    \eqref{eq:bce_term_1} can be viewed as the loss of $\hf$ due to distribution shift between the train distribution $q_{X, A, Y}$ and test distribution $p_{X, A, Y}$;
    \eqref{eq:bce_term_2} is the difference in loss between $\hf$ on the train distribution and the Bayes optimal classifier (also relative to the train distribution), which can be viewed as the loss due to the choice of model family $\mathcal{F}$;
    \eqref{eq:bce_term_3} also captures loss due to distribution shift, but for the Bayes optimal classifier of $q_{1|X, A}$; 
    \eqref{eq:bce_term_5} gives difference in test loss between the Bayes optimal classifier that is optimal with respect to the train distribution and one that is optimal with respect to the test distribution, 
    which can be viewed in some sense as the irreducible generalization loss;
    and \eqref{eq:bce_term_6} gives the irreducible test loss of the Bayes optimal classifier $p_{1|X,A}$ (that is optimal with respect the test distribution). 

    \paragraph{Step 2: Expanding \eqref{eq:bce_term_1} and \eqref{eq:bce_term_3}.}
    Focusing on just two of the terms above, we have
    \begin{align*}
        \eqref{eq:bce_term_1} + \eqref{eq:bce_term_3}
        &= 
        - \bbE_{x, a, y \sim p_{X, A, Y}} [ y \log \hf(x, a) + (1 - y) \log (1 - \hf(x, a)) ]
        \\
        & \qquad + 
        \bbE_{x, a, y \sim q_{X, A, Y}} [ y \log \hf(x, a) + (1 - y) \log (1 - \hf(x, a)) ]
        \\
        & \qquad - \bbE_{x, a, y \sim q_{X, A, Y}} [ y \log q_{Y \mid X,A}(1 | x, a) + (1 - y) \log (1 - q_{Y \mid X,A}( 1 | x, a)) ]
        \\
        & \qquad + 
        \bbE_{x, a, y \sim p_{X, A, Y}} [ y \log q_{Y \mid X,A}(1 | x, a) + (1 - y) \log (1 - q_{Y \mid X,A}(1 | x, a)) ]
        \\
        &=
        - \int_{x, a, y}
        \left( p_{X, A, Y}(x, a, y) - q_{X, A, Y} (x, a, y) \right)
        \\
        & \qquad \qquad \qquad 
        \left( 
            y \log \hf(x, a) + (1 - y) \log (1 - \hf(x, a))
        \right)
        dx \, da \, dy
        \\
        &\qquad + \int_{x, a, y}
        \left( p_{X, A, Y}(x, a, y) - q_{X, A, Y} (x, a, y) \right)
        \\
        & \qquad \qquad \qquad 
        \Big( 
            y \log q_{Y \mid X,A}(1 | x, a) + (1 - y) \log (1 - q_{Y \mid X,A}(1 | x, a))
        \Big)
        dx \, da \, dy
        \\
        &=
        \int_{x, a, y}
        \left( p_{X, A, Y}(x, a, y) - q_{X, A, Y} (x, a, y) \right)
        \\
        & \qquad \qquad  
        \left( 
           y \log \left( \frac{q_{Y \mid X,A}(1 | x, a)} {\hf(x, a)} \right)  + (1 - y) \log \left( \frac{1 - q_{Y \mid X,A}(1 | x, a)}{1 - \hf(x, a)} \right)
        \right)
        dx \, da \, dy
        \\
        &=
        \int_{x, a, y} q_{X, A, Y}(x, a, y) (w(x, a, y) - 1)
        \\
        & \qquad \qquad  
        \left( 
           y \log \left( \frac{q_{Y \mid X,A}(1 | x, a)} {\hf(x, a)} \right)  + (1 - y) \log \left( \frac{1 - q_{Y \mid X,A}(1 | x, a)}{1 - \hf(x, a)} \right)
        \right)
        dx \, da \, dy
        \\
        &=  \bbE_{q_{X,A,Y}} \Bigg[ (w(x,a,y) - 1) \Bigg( 
           y \log \left( \frac{q_{Y \mid X,A}(1 \mid x, a)} {\hf(x, a)} \right)  
           \\
           &\hspace{110pt} + (1 - y) \log \left( \frac{1 - q_{Y \mid X,A}(1 \mid x, a)}{1 - \hf(x, a)} \right)
        \Bigg) \Bigg] ,
    \end{align*}
    where $w(x, a, y) = p_{X, A, Y}(x, a, y) / q_{X, A, Y}(x, a, y)$.
    Let $$R \defeq w(x,a,y) - 1 \quad \text{and} \quad S \defeq y \log \left( \frac{q_{Y \mid X,A}(1 \mid x, a)} {\hf(x, a)} \right)  + (1 - y) \log \left( \frac{1 - q_{Y \mid X,A}(1 \mid x, a)}{1 - \hf(x, a)} \right).$$ 
    Then, by invoking \Cref{lem:w_uncorr_KL}---which applies because $S$ is absolutely integrable 
    under the condition given in the lemma statement and because $p_{X, A, Y} \ll q_{X, A, Y}$---we have that 
    $$\bbE_{q_{X,A,Y}} [RS] = \bbE_{q_{X,A,Y}}[R] \bbE_{q_{X,A,Y}}[S] \iff \bbE_{q_{X,A,Y}}[S] = \bbE_{p_{X,A,Y}}[S].$$
    
    By the condition given in the lemma statement, 
    $\bbE_{q_{X,A,Y}}[S] = \bbE_{p_{X,A,Y}}[S]$ and hence we obtain
    \begin{align*}
         \eqref{eq:bce_term_1} + \eqref{eq:bce_term_3}
         = \bbE_{q_{X,A,Y}} [RS] = \bbE_{q_{X,A,Y}}[R] \bbE_{q_{X,A,Y}}[S] = 0 . \nonumber
    \end{align*}
    
    where the last equality is because $\bbE_{q_{X,A,Y}}[R] = 0$, as shown in the proof of \Cref{lem:w_uncorr_KL}.
    
\paragraph{Step 3: Expanding \eqref{eq:bce_term_5}.}
The term \eqref{eq:bce_term_5} can be written as
\begin{align*}
    - \int_{x, a, y} p_{X, A, Y}(x, a, y)
    \left[y \log \frac{q_{Y | X, A}(1 | x, a)}{p_{Y | X, A}(1 | x, a)} 
    + (1-y) \log \frac{q_{Y | X, A}(0 | x, a)}{p_{Y | X, A}(0 | x, a)} \right] dx \, da \, dy.
\end{align*}
Substituting for possible values of $y \in \{0, 1\}$ gives
\begin{align*}
    & - \int_{x, a} p_{X, A}(x, a) p_{Y | X, A}( 0 | x, a)
    \left[ \log \frac{q_{Y | X, A}(0 | x, a)}{p_{Y | X, A}(0 | x, a)} \right] dx \, da 
    \\
    &\qquad  - \int_{x, a} p_{X, A}(x, a) p_{Y | X, A}( 1 | x, a)
    \left[ \log \frac{q_{Y | X, A}(1 | x, a)}{p_{Y | X, A}(1 | x, a)} \right] dx \, da 
    \\
    &= - \int_{x, a} p_{X, A}(x, a) (1 - p_{Y | X, A}( 1 | x, a))
    \left[ \log \frac{q_{Y | X, A}(0 | x, a)}{ p_{Y | X, A}(0 | x, a)} \right] dx \, da 
    \\
    &\qquad  - \int_{x, a} p_{X, A}(x, a) p_{Y | X, A}( 1 | x, a)
    \left[ \log \frac{q_{Y | X, A}(1 | x, a)}{p_{Y | X, A}(1 | x, a)} \right] dx \, da 
    \\
    &= - \int_{x, a} p_{X, A}(x, a) \Bigg[ (1 - p_{Y | X, A}( 1 | x, a))
    \Bigg[ \log \frac{1 - q_{Y | X, A}(1 | x, a)}{ 1 - p_{Y | X, A}(1 | x, a)} \Bigg] 
        \\
        & \hspace{130pt} + p_{Y | X, A}( 1 | x, a)
        \Bigg[ \log \frac{q_{Y | X, A}(1 | x, a)}{p_{Y | X, A}(1 | x, a)} \Bigg] \Bigg] 
    \\
    &= 
    \bbE_{x, a \sim p_{X, A}} \left[ \KL{ p_{Y | X,A}(1 \mid x,a)}{q_{Y \mid X,A}(1 \mid x,a)} \right] ,
\end{align*}
where we use $\KL{r}{s}$ to denote the KL divergence between the Bernoulli$(r)$ and Bernoulli$(s)$ distributions.

\paragraph{Step 4: Putting it together.}
Combining Steps 1-3, 

\begin{align}
&H(p_{X, A, Y}, \hf) \nonumber
\\
&= 
\eqref{eq:bce_term_1}
+ 
\eqref{eq:bce_term_2}
+ 
\eqref{eq:bce_term_3}
+ 
\eqref{eq:bce_term_5}
+ 
\eqref{eq:bce_term_6} \nonumber
\\
&= 
\eqref{eq:bce_term_2}
+
\eqref{eq:bce_term_5}
+ 
\eqref{eq:bce_term_6}
 \nonumber
\\
&= 
H(q_{XAY}, \hf) - H(q_{XAY}, q_{1|X,A})  \nonumber
\\
&\qquad \qquad 
+
\bbE_{x, a \sim p_{X, A}} \left[ \KL{ p_{Y | x, a}}{q_{Y \mid x, a}} \right] 
+ 
H(p_{X, A, Y}, p_{1|X,A}) 
, \nonumber
\end{align}

which concludes the proof.
\end{proof}

\section{Proof of Theorem \ref{thm:main_result}}\label{app:main_result_proof}

\begin{proof}
    In this proof, 
    we use \Cref{thm:key_thm} to express the BCE loss in terms 
    of the train distribution parameter $\zeta$, 
    then likewise express the demographic parity gap in terms of $\zeta$. 
    Combining these representations allows us to express the BCE loss directly in terms of the
    demographic parity gap.
    We note that, by definition of our data-generating process, $p_{X, A, Y} \ll \smash{q^\zeta_{X, A, Y}}$ for all $\zeta$.

    \paragraph{Step 1: Simplifying BCE loss.}
    Recall from \Cref{thm:key_thm} that
    \begin{align*}
        H (p_{X, A, Y}, \hf^\zeta)
        &= 
                (H(q^\zeta_{XAY}, \hf^\zeta) - H(q^\zeta_{XAY}, q^\zeta_{1|X,A}) ) \\
            &\hspace{40pt}+ \bbE_{x, a \sim p_{X, A}} \left[ \KL{ p_{Y | x, a}}{q^\zeta_{Y \mid x, a} } \right]
            + H(p_{X, A, Y}, p_{1|X,A}) 
        .
    \end{align*}
    We begin by characterizing each component of this expression. 
    
    First, {by \Cref{asm:stronger_symmetry}, 
    $H(q^\zeta_{XAY}, \hf^\zeta) - H(q^\zeta_{XAY}, q^\zeta_{1|X,A})$ 
    can be written as a constant $c_1(\mathcal{F}, D)$
    that depends only on the model class $\mathcal{F}$ and the
    amount of training data $D$}.

    Second, we note that $H(p_{X, A, Y}, p_{1|X,A})$ can also be treated 
    as a constant $c_2(p)$ that depends only on the test distribution $p$ and not on 
    the model class $\mathcal{F}$ or the train distribution $q^\zeta$.
    This gives
    \begin{align}
        H (p_{X, A, Y}, \hf^\zeta)
        &= 
                c_1(\cF, D) 
                + 
                \bbE_{x, a \sim p_{X, A}} \left[ \KL{ p_{Y | x, a}}{q^\zeta_{Y \mid x, a} } \right]
            + c_2(p)
        , \label{eq:thm2_line0}
    \end{align}
    where
    \begin{align*}
        c_1(\cF, D) &\defeq H(q^\zeta_{XAY}, \hf^\zeta) - H(q^\zeta_{XAY}, q^\zeta_{1|X,A}) ,
        \\
        c_2(p) &\defeq H(p_{X, A, Y}, p_{1|X,A}) .
    \end{align*}
    Thus, it remains only to characterize the expected KL divergence
    term and write it in terms of the demographic parity gap.
    That will be the goal of the following steps.

    \paragraph{Step 2: Rewriting the KL divergence term.}
    \begin{align}
        \bbE_{x, a \sim p_{X, A}} & \left[ \KL{ p_{Y | x, a}}{q^\zeta_{Y \mid x, a} } \right]
        \nonumber
        \\
        &= 
        - \int_{x, a} p_{X, A}(x, a) \Bigg[ (1 - p_{Y | X, A}( 1 | x, a))
            \log \frac{1 - q^\zeta_{Y | X, A}(1 | x, a)}{ 1 - p_{Y | X, A}(1 | x, a)}  
                \nonumber
                \\
                & \hspace{130pt} + p_{Y | X, A}( 1 | x, a)
                \log \frac{q^\zeta_{Y | X, A}(1 | x, a)}{p_{Y | X, A}(1 | x, a)} \Bigg]  
        \nonumber
        \\
        &= 
        - \bbE_{x, a \sim p_{X,A}} [ (1 - p_{Y | X, A}( 1 | x, a))
            \log {(1 - q^\zeta_{Y | X, A}(1 | x, a))} 
                \nonumber
                \\
                & \hspace{130pt} + p_{Y | X, A}( 1 | x, a)
                \log {q^\zeta_{Y | X, A}(1 | x, a)} ] 
        + c_3(p), \label{eq:thm2_step2_1}
    \end{align}
    where 
    \begin{align*}
        c_3(p) &\defeq \int_{x, a} p_{X, A}(x, a) \Big[ (1 - p_{Y | X, A}( 1 | x, a))
            \log ({1 - p_{Y | X, A}(1 | x, a)}) 
                \\
                & \hspace{130pt} + p_{Y | X, A}( 1 | x, a)
                \log ({p_{Y | X, A}(1 | x, a)}) \Big],
    \end{align*}
    is a constant that depends only on the test distribution $p$.
    Recalling from our data generating process given in \Cref{sec:main_result} that $p(A = 1) = \pi$,
    \eqref{eq:thm2_step2_1} becomes
    \begin{align}
        \bbE_{x, a \sim p_{X, A}} & \left[ \KL{ p_{Y | x, a}}{q^\zeta_{Y \mid x, a} } \right]
        \nonumber
        \\
        &=
        - \bbE_{x, a \sim p_{X,A}} [ (1 - p_{Y | X, A}( 1 | x, a))
            \log {(1 - q^\zeta_{Y | X, A}(1 | x, a))} 
        \nonumber 
        \\
                & \hspace{130pt} + p_{Y | X, A}( 1 | x, a)
                \log {q^\zeta_{Y | X, A}(1 | x, a)} ] 
                + c_3(p) 
                \nonumber
        \\
        &= \pi \Big(
            - \bbE_{x \sim p_{X|A}} \Big[ (1 - p_{Y | X, A}( 1 | x, A))
             \log ({1 - q^\zeta_{Y | X, A}(1 | x, A)} )
        \nonumber 
        \\
                & \hspace{80pt}  + p_{Y | X, A}( 1 | x, A)
                 \log {q^\zeta_{Y | X, A}(1 | x, A)} \, \big| \, A = 1  \Big] 
            \Big)
            + c_4(p)
        , \label{eq:thm2_line1}
    \end{align}
    where the last line follows from the fact that, when $A = 0$, 
    the $q$ distribution no longer depends on $\zeta$ (cf.  our data 
    generating process for $Y$ given in \Cref{sec:main_result})
    such that 
    \begin{align*}
        c_4(p) &\defeq c_3(p) 
        + 
        (1 - \pi) 
        \Big(
            - \bbE_{x \sim p_{X|A}} \Big[ (1 - p_{Y | X, A}( 1 | x, A))
             \log ({1 - q^\zeta_{Y | X, A}(1 | x, A)} )
                \\
                & \hspace{100pt}  + p_{Y | X, A}( 1 | x, A)
                 \log {q^\zeta_{Y | X, A}(1 | x, A)} \, \big| \, A = 0  \Big] 
            \Big)    
        .
    \end{align*}

    \paragraph{Step 3: Upper bounding KL divergence term.}
    By the definition of covariance,
    \begin{align*}
        \bbE_{x \sim p_{X|A}} &\Big[ (1 - p_{Y | X, A}( 1 | x, A))
            \log ({1 - q^\zeta_{Y | X, A}(1 | x, A)} )
            \, \big| \, A = 1
            \Big]
        \\
        &= 
        \bbE_{x \sim p_{X|A}} \Big[ (1 - p_{Y | X, A}( 1 | x, A))
        \, \big| \, A = 1
        \Big] 
         \bbE_{x \sim p_{X|A}} \Big[
            \log ({1 - q^\zeta_{Y | X, A}(1 | x, A)} )
            \, \big| \, A = 1
        \Big]
        \\
        & \hspace{40pt} + \mathrm{Cov}_{x \sim p_{X|A}} \Big(
            (1 - p_{Y | X, A}( 1 | x, A)),
            \log ({1 - q^\zeta_{Y | X, A}(1 | x, A)} )
            \, \big| \, A = 1
        \Big) .
    \end{align*}
    By Lemma \ref{lem:cov_nonneg},
    and our data-generating process as given in the theorem statement and \Cref{sec:main_result},
    
    the covariance term is non-negative.
    Therefore,
    \begin{align}
        \bbE_{x \sim p_{X|A}} &\Big[ (1 - p_{Y | X, A}( 1 | x, A))
            \log ({1 - q^\zeta_{Y | X, A}(1 | x, A)} )
            \, \big| \, A = 1
            \Big] \nonumber
        \\
        &\geq
        \bbE_{x \sim p_{X|A}} \Big[ (1 - p_{Y | X, A}( 1 | x, A))
        \, \big| \, A = 1
        \Big] 
         \bbE_{x \sim p_{X|A}} \Big[
            \log ({1 - q^\zeta_{Y | X, A}(1 | x, A)} )
            \, \big| \, A = 1
        \Big] . \label{sarah-check-p}
    \end{align}
    One can apply analogous reasoning 
    to show, again, by \Cref{lem:cov_nonneg} 
    that
    \begin{align*}
        \bbE_{x \sim p_{X|A}} &\Big[ p_{Y | X, A}( 1 | x, A)
            \log  q^\zeta_{Y | X, A}(1 | x, A) 
            \, \big| \, A = 1
            \Big]
        \\
        &\geq
        \bbE_{x \sim p_{X|A}} \Big[ p_{Y | X, A}( 1 | x, A)
        \, \big| \, A = 1
        \Big] 
         \bbE_{x \sim p_{X|A}} \Big[
            \log  q^\zeta_{Y | X, A}(1 | x, A)
            \, \big| \, A = 1
        \Big] .
    \end{align*}
    (Note that the two applications of \Cref{lem:cov_nonneg} above are where the inequality in \Cref{thm:main_result} arises.
    Therefore, if one wishes to remove the inequality or provide a high-probability 
    version of our result that holds with equality,
    one should address \Cref{lem:cov_nonneg}.)
    Therefore, 
    from \eqref{eq:thm2_line1},
    \begin{align*}
        & \bbE_{x, a \sim p_{X, A}}  \left[ \KL{ p_{Y | x, a}}{q^\zeta_{Y \mid x, a} } \right]
        \nonumber
        \\
        &\leq - \pi \Big(
             \bbE_{x \sim p_{X|A}} \Big[ (1 - p_{Y | X, A}( 1 | x, A)) \, \big| \, A = 1 \Big]
             \bbE_{x \sim p_{X|A}} \Big[ \log ({1 - q^\zeta_{Y | X, A}(1 | x, A)} ) \, \big| \, A = 1 \Big]
                \\
                & \hspace{40pt}  
                +  \bbE_{x \sim p_{X|A}} \Big[ p_{Y | X, A}( 1 | x, A) \, \big| \, A = 1 \Big]
                  \bbE_{x \sim p_{X|A}} \Big[ \log {q^\zeta_{Y | X, A}(1 | x, A)} \, \big| \, A = 1  \Big] 
            \Big)
            + c_4(p)
        \\
        &= - \pi 
            (1 - c_5(p))
             \bbE_{x \sim p_{X|A}} \Big[ \log ({1 - q^\zeta_{Y | X, A}(1 | x, A)} ) \, \big| \, A = 1 \Big]
                \\
                & \hspace{40pt}   
                - \pi c_5(p)
                  \bbE_{x \sim p_{X|A}} \Big[ \log {q^\zeta_{Y | X, A}(1 | x, A)} \, \big| \, A = 1  \Big] 
            + c_4(p) ,
    \end{align*}
    where $c_5(p) \defeq  \bbE_{x \sim p_{X|A}} [ p_{Y | X, A}( 1 | x, A) \, | \, A = 1 ]$.
    
    \paragraph{Step 4: Moving the expectation inside the log.}
        Then, we apply \Cref{lem:bounded_E_log}, 
        using $q_{Y \mid X,A}^\zeta(1 \mid x, 1)$ as the $Z$ in the statement of the lemma, 
        and noting that it is a random variable that takes values between 0 and 1,
        as required by the lemma.
        For ease of notation, let 
        \begin{align}
            \Bar{q}_1 &\defeq \bbE_{x \sim p_{X|A}} \Big[ {q^\zeta_{Y | X, A}(1 | x, A) \mid A = 1}  \Big] , \nonumber
            \\ 
            \Bar{q}_0 &\defeq \bbE_{x \sim p_{X|A}} \Big[ {q^\zeta_{Y | X, A}(1 | x, A) \mid A = 0}  \Big] , \nonumber
            \\
            V &\defeq \mathrm{Var}_{x \sim p_{X|A}} \Big[  {q^\zeta_{Y | X, A}(1 | x, A) \mid A = 1}  \Big] . \label{eq:qbar}
        \end{align}
        Then, \Cref{lem:bounded_E_log} gives
        \begin{align}
            & \bbE_{x, a \sim p_{X, A}}  \left[ \KL{ p_{Y | x, a}}{q^\zeta_{Y \mid x, a} } \right]
            \nonumber
            \\
            &\leq -\pi (1 - c_5(p)) \left( \log(1 - \Bar{q}_1) - \frac{V}{2(1-\Bar{q}_1)^2} \right)  
            - \pi c_5(p) \left( \log (\Bar{q}_1) - \frac{V}{2\Bar{q}_1^2}  \right) \nonumber \\
            & \hspace{40pt} + \mathcal{O}\left( \bbE_{x \sim p_{X \mid A}}\left[(q_{Y \mid X,A}^\zeta(1 \mid x, A) - \Bar{q}_1)^3 \mid A = 1 \right] \right)
            + c_4(p)
            .
            \label{eq:thm2_line4}
        \end{align}
    Now that the KL divergence term is upper bounded, 
    we proceed to characterize the demographic parity gap, 
    and then substitute it into the KL divergence term.

    \paragraph{Step 5: Characterizing demographic parity gap.}
    The definition of demographic parity gap is
    \begin{align*}
         \Delta(p_{X, A, Y}, \hf^\zeta) \defeq \left|
            \bbE_{x \sim p_{X | A}} [
                \hf^\zeta(x, A) | A = 1
            ]
            - \bbE_{x \sim p_{X | A}} [
                \hf^\zeta(x, A) | A = 0
            ]
        \right| .
    \end{align*}
    We can rewrite it as
    \begin{align}
        \bbE_{x \sim p_{X | A}} & [
                \hf^\zeta(x, A) | A = 1
            ]
            - \bbE_{x \sim p_{X | A}} [
                \hf^\zeta(x, A) | A = 0
            ] \nonumber
        \\
        & = 
        \bbE_{x \sim p_{X | A}} [
                q^\zeta_{Y \mid X,A}(1 | x, A) | A = 1
            ]
            - \bbE_{x \sim p_{X | A}} [
                q^\zeta_{Y \mid X,A}(1 | x, A) | A = 0
            ] \nonumber
        \\
        & \hspace{40pt} + 
        (
            \bbE_{x \sim p_{X | A}} [
                \hf^\zeta(x, A) | A = 1
            ]
            - \bbE_{x \sim p_{X | A}} [
                q^\zeta_{Y \mid X,A}(1 | x, A) | A = 1
            ] \label{eq:thm2_line2}
        )
        \\
        & \hspace{40pt} 
        + (
            \bbE_{x \sim p_{X | A}} [
                q^\zeta_{Y \mid X,A}(1 | x, A) | A = 0
            ]
            - \bbE_{x \sim p_{X | A}} [
                \hf^\zeta(x, A) | A = 0
            ] \label{eq:thm2_line3}
        ) .
    \end{align}
    By \Cref{asm:stronger_symmetry},
    \begin{align*}
        \bbE_{x \sim p_{X | A}} & [
                \hf^\zeta(x, A) | A = 1
            ]
            - \bbE_{x \sim p_{X | A}} [
                \hf^\zeta(x, A) | A = 0
            ]
        \\
        & = 
        \bbE_{x \sim p_{X | A}} [
                q_{Y \mid X,A}^\zeta(1 | x, A) | A = 1
            ]
            - \bbE_{x \sim p_{X | A}} [
                q_{Y \mid X,A}^\zeta(1 | x, A) | A = 0
            ].
    \end{align*}
    By our data generating process given in \Cref{thm:main_result}, in which $A$ mediates the effect of $\zeta$ and 
    larger values of $\zeta$ result in lower positive classification rates,
    \begin{align*}
        \bbE_{x \sim p_{X | A}} [
                q_{Y \mid X,A}^\zeta(1 | x, A) | A = 0
            ] 
        \geq
        \bbE_{x \sim p_{X | A}} [
                q_{Y \mid X,A}^\zeta(1 | x, A) | A = 1
            ] .
    \end{align*}
    for $\zeta \geq 0$.
    Therefore, 
    \begin{align}
        \Delta(p_{X, A, Y}, \hf^\zeta)
        &= \Bar{q}_0 - \Bar{q}_1 . \label{eq:thm2_Delta}
    \end{align}
    Note that if we relaxed the second condition in \Cref{asm:stronger_symmetry} to allow for an additive constant, 
    this would allow us to complete our analysis but with more bookkeeping, as the expression above would have two cases to ensure $\Delta \geq 0$.

    \paragraph{Step 6: Combining and rewriting BCE loss in terms of $\Delta(p_{X, A, Y}, \hf^\zeta)$.}
    We can now substitute this expression for $\Delta(p_{X, A, Y}, \hf^\zeta)$ from \eqref{eq:thm2_Delta}
    into the BCE loss. 
    From \eqref{eq:thm2_line4}, we have
    \begin{align*}
        & \bbE_{x, a \sim p_{X, A}}  \left[ \KL{ p_{Y | x, a}}{q^\zeta_{Y \mid x, a} } \right]
        \nonumber
        \\
        &\leq - \pi (1 - c_5(p)) 
            \log (1 - \Bar{q}_0  + \Delta(p_{X, A, Y}, \hf^\zeta)  )   +
            \frac{ \pi (1 - c_5(p)) V}{2(1 - \Bar{q}_0 + \Delta(p_{X, A, Y}, \hf^\zeta))^2}
        \\
        & \hspace{35pt}  
        - \pi c_5(p)
            \log (\Bar{q}_0 - \Delta(p_{X, A, Y}, \hf^\zeta))  + 
            \frac{\pi c_5(p) V}{2(\Bar{q}_0 - \Delta(p_{X, A, Y}, \hf^\zeta))^2} 
        \\
        & \hspace{35pt}  + c_4(p) 
            + \mathcal{O}\left( \bbE_{x \sim p_{X \mid A}}\left[(q_{Y \mid X,A}^\zeta(1 \mid x, A) - \Bar{q}_1)^3 \mid A = 1 \right] \right)
            .
    \end{align*}
    We use \Cref{lem:var_bound_in_terms_of_delta} with $Z = q_{Y \mid X,A}^\zeta(1 \mid x,A), \mu = \Bar{q}_1, \mu_0 = \Bar{q}_0,$ $\Gamma = \Delta = \Bar{q}_0 - \Bar{q}_1$ by \eqref{eq:thm2_Delta},
    and that the expectations in \Cref{lem:var_bound_in_terms_of_delta} are taken with respect to $p_{X | A = 1}$ to get
    $$V \leq {(c_6(p) - \Delta) (1 - c_6(p) + \Delta(p_{X, A, Y}, \hf^\zeta) )},$$ 
    where $c_6(p) = \Bar{q}_0$.
    Therefore,  
        \begin{align}
            &\bbE_{x, a \sim p_{X, A}}  \left[ \KL{ p_{Y | x, a}}{q^\zeta_{Y \mid x, a} } \right]
            \nonumber
            \\
            &\leq - \pi 
                (1 - c_5(p))
                \log (1 - c_6(p) + \Delta(p_{X, A, Y}, \hf^\zeta)  ) \nonumber
            \\
            &\hspace{40pt} 
            + \frac{\pi (1 - c_5(p)) {(c_6(p) - \Delta(p_{X, A, Y}, \hf^\zeta)) (1 - c_6(p) + \Delta(p_{X, A, Y}, \hf^\zeta) )}}{2(1 - c_6(p) + \Delta(p_{X, A, Y}, \hf^\zeta))^2}
            \nonumber 
            \\
            &\hspace{40pt} 
            - \pi c_5(p)
                \log (c_6(p) - \Delta(p_{X, A, Y}, \hf^\zeta)) 
            \nonumber
            \\
            &\hspace{40pt}
            + \frac{\pi c_5(p) {(c_6(p) - \Delta(p_{X, A, Y}, \hf^\zeta)) (1 - c_6(p) + \Delta(p_{X, A, Y}, \hf^\zeta) )}}{2(c_6(p) - \Delta(p_{X, A, Y}, \hf^\zeta))^2}
            \nonumber 
            \\
            &\hspace{40pt} 
            + c_4(p) 
            + \mathcal{O}\left( \bbE_{x \sim p_{X \mid A}}\left[(q_{Y \mid X,A}^\zeta(1 \mid x, A) - \Bar{q}_1)^3 \mid A =1 \right] \right)
            , \label{eq:thm2_final_KL_bound}
        \end{align}
    where we recall all constants:
    \begin{align*}
        c_2(p) &= H(p_{X, A, Y}, p_{1|X,A}) ,
        \\
        c_3(p) 
            &= \bbE_{x, a \sim p_{X, A}} \Big[ 
                (1 - p_{Y | X, A}( 1 | x, a))
                \log ({1 - p_{Y | X, A}(1 | x, a)}) 
                \\
                & \hspace{70pt} + p_{Y | X, A}( 1 | x, a)
                \log ({p_{Y | X, A}(1 | x, a)})
            \Big] ,
        \\
        c_4(p) &= c_3(p)  
        - (1 - \pi) 
             \bbE_{x \sim p_{X|A}} \Big[ (1 - p_{Y | X, A}( 1 | x, A))
             \log ({1 - q^\zeta_{Y | X, A}(1 | x, A)} )
                \\
                & \hspace{130pt}  + p_{Y | X, A}( 1 | x, A)
                 \log {q^\zeta_{Y | X, A}(1 | x, A)} \, \big| \, A = 0  \Big] 
             ,
        \\
        c_5(p) &=  \bbE_{x \sim p_{X|A}} [ p_{Y | X, A}( 1 \mid x, A) \mid A = 1 ]
        \\
        c_6(p) &= \bbE_{x \sim p_{X \mid A}}
                \left[ q^{\zeta}_{Y | X, A}(1 \mid x, A) \mid A = 0 \right]
    \end{align*}
    Putting it all together, combining \eqref{eq:thm2_line0} 
    and \eqref{eq:thm2_final_KL_bound},
    \begin{align*}
        H (p_{X, A, Y}, \hf^\zeta)
        &\leq 
        B(\cF, D) 
        - c \cdot c' \log (1 - c'' + \underline{\Delta}  ) 
        - c \cdot (1-c') \log (c'' - \underline{\Delta}) 
        \\
        &\hspace{40pt} 
            + {{(c'' - \underline{\Delta}) (1 - c'' + \underline{\Delta} )}}
            \left(
                \frac{c \cdot c'}{2(1 - c'' + \underline{\Delta})^2}
                + 
                \frac{c\cdot (1-c')}{2(c'' - \underline{\Delta})^2}
            \right)
            \nonumber 
            \\
            &\hspace{40pt} 
            + \mathcal{O}\left( \bbE_{x \sim p_{X \mid A}}\left[(q_{Y \mid X,A}^\zeta(1 \mid x, A) - \Bar{q}_1)^3 \mid A = 1 \right] \right)
        ,
    \end{align*}
    where 
    \begin{align*}
        B(\cF, D) &\defeq {c_1(\cF, D) + c_2(p) + c_4(p)} \\ &=  c_1(\cF, D) + H(p_{X, A, Y}, p_{1|X,A})
            \\
            &\qquad - (1 - \pi) 
             \bbE_{x \sim p_{X|A}} \Big[ (1 - p_{Y | X, A}( 1 | x, A))
             \log ({1 - q^\zeta_{Y | X, A}(1 | x, A)} )
                \\
                & \hspace{110pt}  + p_{Y | X, A}( 1 | x, A)
                 \log {q^\zeta_{Y | X, A}(1 | x, A)} \, \big| \, A = 0  \Big] 
            \\
            &\qquad + \bbE_{x, a \sim p_{X, A}} \Big[ 
                (1 - p_{Y | X, A}( 1 | x, a))
                \log ({1 - p_{Y | X, A}(1 | x, a)}) 
                \\
                & \hspace{110pt} + p_{Y | X, A}( 1 | x, a)
                \log ({p_{Y | X, A}(1 | x, a)})
            \Big]
        \\
        c &\defeq \pi \\
        c' &\defeq 1 - \bbE_{x \sim p_{X|A}} [ p_{Y | X, A}( 1 \mid x, A) \mid A = 1 ] \in [0, 1] \\
        c'' &\defeq \bbE_{x \sim p_{X | A}} 
                [ q^\zeta(1 | x, A) | A = 0  ] 
                \in [0, 1] \\
    \end{align*}
    $c, c', c'' \geq 0$.
    Note that the constants may not correspond exactly to the stated quantities above, e.g., due to the note below \eqref{eq:thm2_Delta}.
\end{proof}

\section{Additional Experimental Details and Results}\label{app:experiment}

\subsection{Simulations}

In this section, we provide further simulations, using the same procedure as given in \Cref{sec:simulations}.

\begin{figure}[t]
    \centering
    \includegraphics[width=\linewidth]{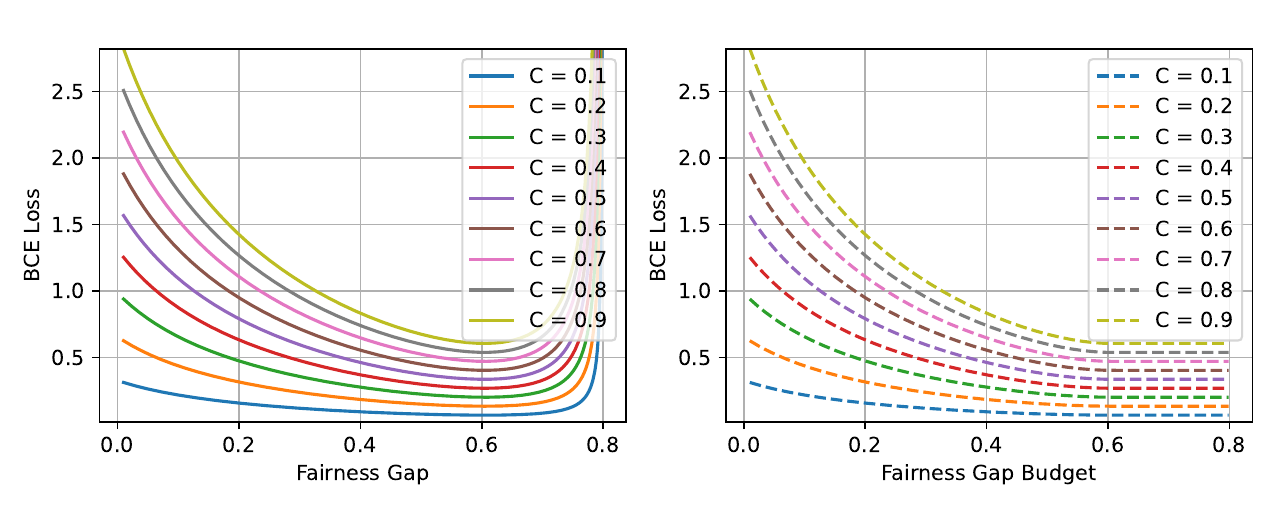}
    \caption{\small Simulations showing the shape of the closed-form expression for the Pareto frontier given in \Cref{thm:main_result} for fixed values of $c'' = C_7 = 0.9$ and
    $c'' = 0.8$ while varying $c$. The left plot shows the precise closed-form,
    which is the lowest achievable loss among classifiers that have \emph{exactly} the fairness gap value on the x-axis.
    The right plot shows lowest achievable loss among classifiers that satisfy the fairness gap \emph{budget} on the x-axis. }
    \label{fig:simulations_vary_C2}
\end{figure}

\begin{figure}[t]
    \centering
    \includegraphics[width=\linewidth]{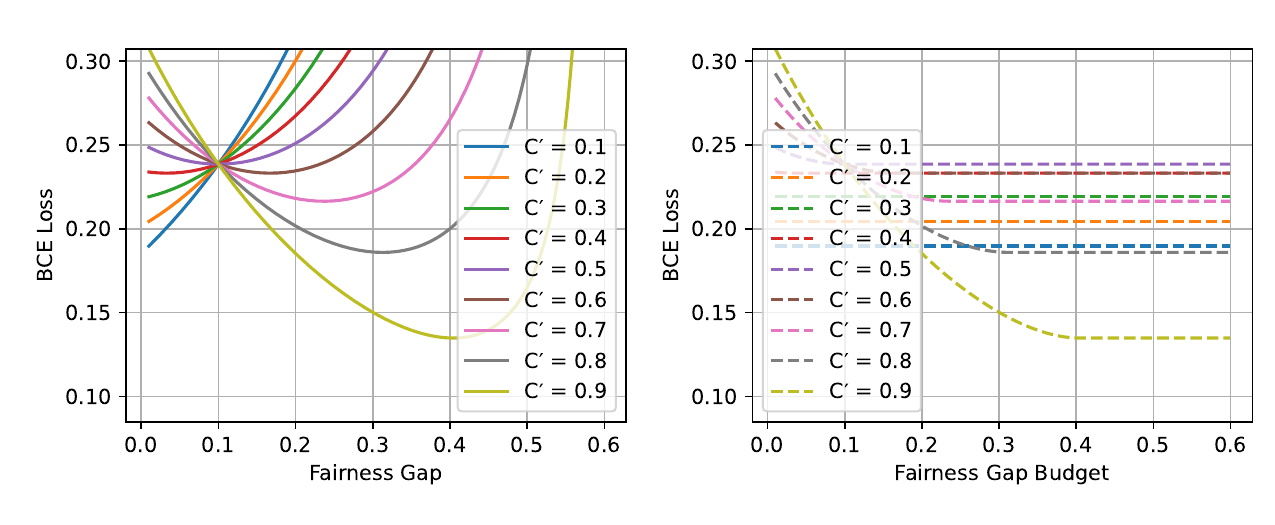}
    \caption{\small Simulations showing the shape of the closed-form expression for the Pareto frontier given in \Cref{thm:main_result} for fixed values of $c = 0.2$ and
    $c'' = 0.6$ while varying $c'$. The left plot shows the precise closed-form,
    which is the lowest achievable loss among classifiers that have \emph{exactly} the fairness gap value on the x-axis.
    The right plot shows lowest achievable loss among classifiers that satisfy the fairness gap \emph{budget} on the x-axis. }
    \label{fig:simulations_vary_C_prime2}
\end{figure}

\begin{figure}[t]
    \centering
    \includegraphics[width=\linewidth]{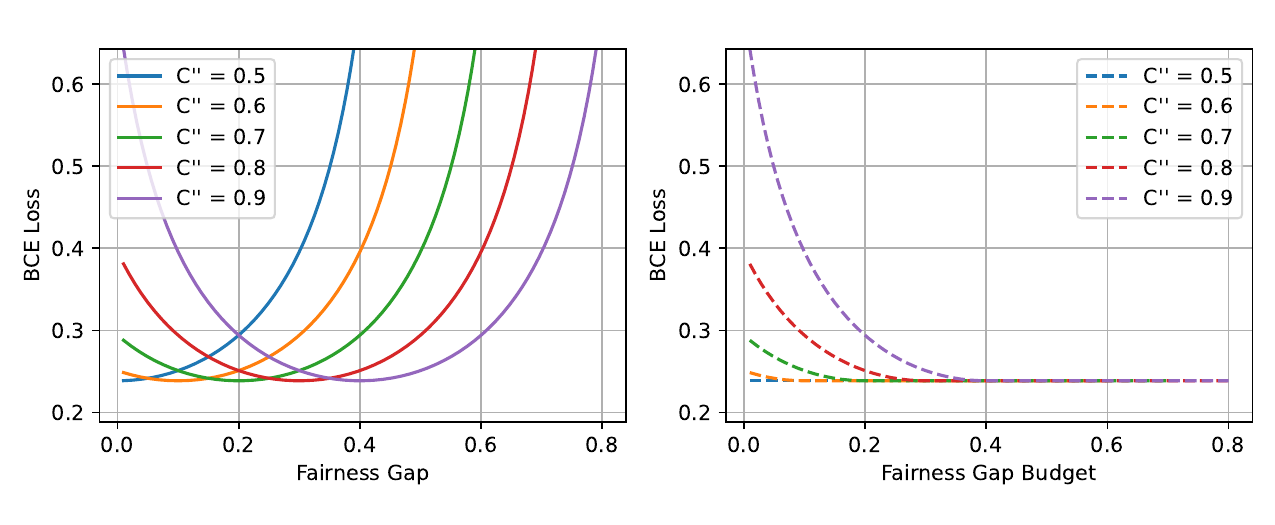}
    \caption{\small Simulations showing the shape of the closed-form expression for the Pareto frontier given in \Cref{thm:main_result} for fixed values of $c = 0.2$ and
    $c'' = C_7 = 0.5$ while varying $c''$. The left plot shows the precise closed-form,
    which is the lowest achievable loss among classifiers that have \emph{exactly} the fairness gap value on the x-axis.
    The right plot shows lowest achievable loss among classifiers that satisfy the fairness gap \emph{budget} on the x-axis. }
    \label{fig:simulations_vary_C_double_prime2}
\end{figure}

\subsection{Synthetic experiments}

In this section, we provide additional details on the setup for the synthetic experiments
in \Cref{fig:fit1} as well as additional results below. 

\subsection{Model Architecture and Training}

We used Pytorch to train our models. 
To test the scaling law, we trained models under 4 different MLP configurations, 
each with two hidden layers and ReLU activations, with a sigmoid output for binary classification:
\begin{itemize}[leftmargin=0.75cm]
    \item \texttt{[80, 80]} hidden units (8080 total parameters)
    \item \texttt{[160, 160]} hidden units (28960 total parameters)
    \item \texttt{[320, 320]} hidden units (109120 total parameters)
    \item \texttt{[640, 640]} hidden units (423040 total parameters)
\end{itemize}

For each architecture and $\lambda$ value,
the data is split into training (65\%), validation (15\%), and test (20\%) sets.
The models are trained for 30 epochs using the Adam optimizer with a learning rate of 0.001 and a batch size of 256, 
selecting the model with the lowest validation loss over the training epochs.
We repeat each configuration over 3 random seeds and use all trials to find the Pareto frontier, 
resulting in 300 models per model size and 1200 models total.

\subsubsection{Additional Results}

We test on two combinations of $\zeta$ and $\pi$ values.

We show results below. 
Interestingly, we found that the typical scaling law $N^{-0.5}$ did not necessarily fit our results perfectly, 
so we indicate the exponent used in each figure caption.

\begin{figure}[h!]
    \centering
    \includegraphics[width=0.95\linewidth]{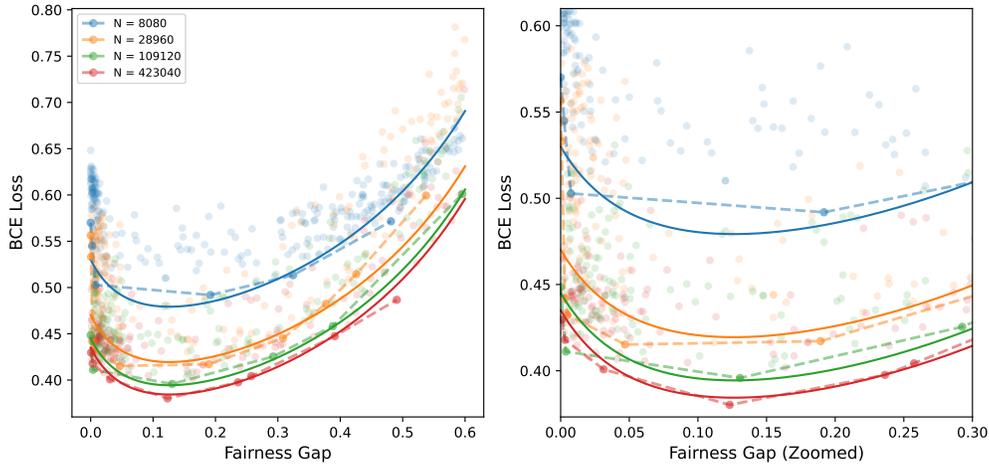}
    \caption{\small 
    Pareto frontier for different model sizes under (\( \pi = 0.2 \)) 
    and bias strength \( \zeta = 0.5 \). Each point corresponds to a trained model with a different 
    fairness regularization weight \( \lambda \). The dashed lines show the empirical Pareto frontier, created by finding the 
    lower convex hull of all the points. The solid lines show fitted curves to the points on the Pareto frontier 
    using \Cref{thm:main_result} with $c = C_5 = 0.16$, $c' = C_6 = 0.11$, $c'' = C_7 = 0.92$, with bias $C_1 = -0.285$, $C_2 = 55$, $C_3 = 0.7$, and $C_4 = 0.5$.
    The left panel shows the frontier across the range of $\Delta$, 
    while the right panel zooms in on \( \Delta \in [0, 0.3] \).
    The fitted curve mimics the empirical data well though it is imperfect. 
    We found that there were many possible fits, depending on the precise choice.
    We show another possible fit in \Cref{fig:fit2} below.
}
\end{figure}

\begin{figure}[h!]
    \centering
    \includegraphics[width=0.95\linewidth]{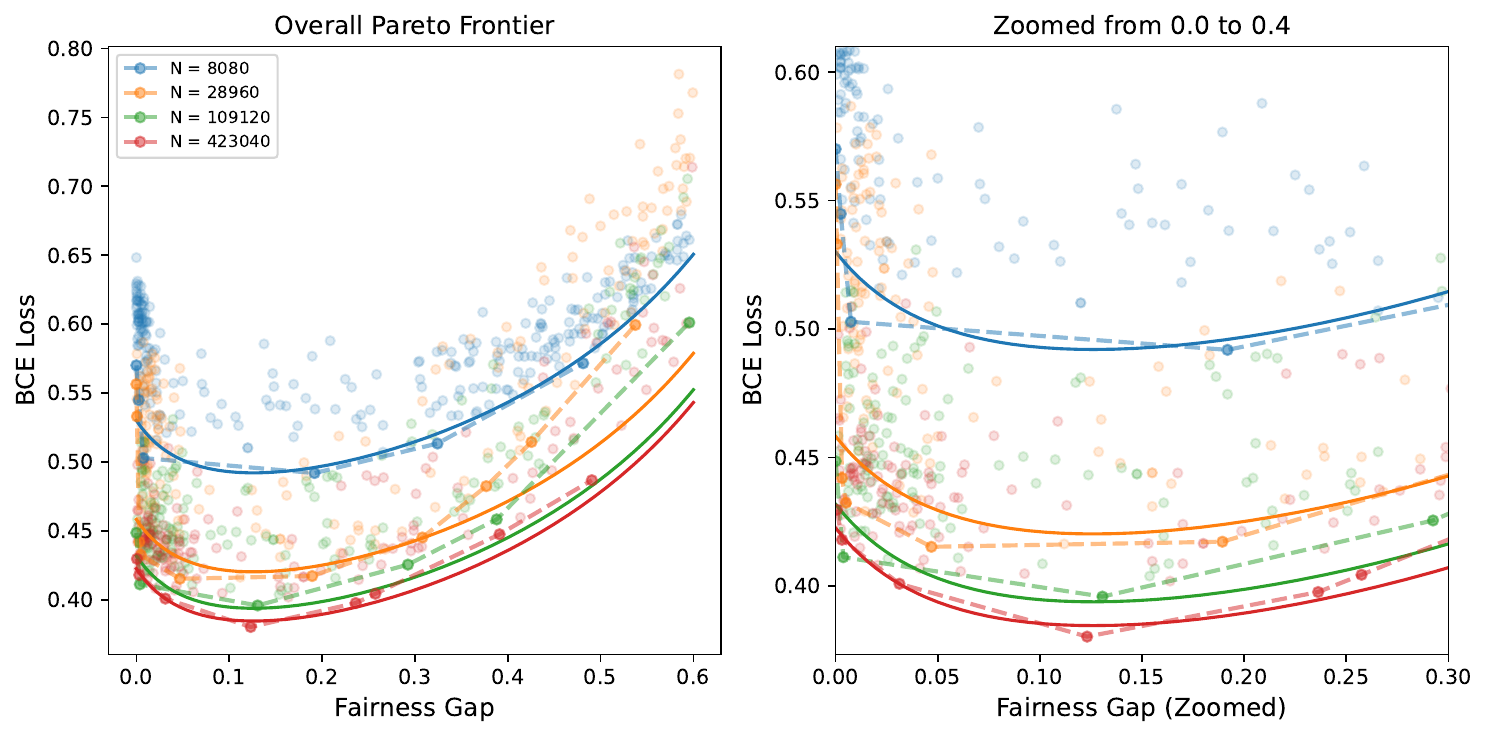}
    \caption{\small 
    Pareto frontier for different model sizes under (\( \pi = 0.2 \)) 
    and bias strength \( \zeta = 0.5 \). Each point corresponds to a trained model with a different 
    fairness regularization weight \( \lambda \). The dashed lines show the empirical Pareto frontier, created by finding the 
    lower convex hull of all the points. The solid lines show fitted curves to the points on the Pareto frontier 
    using \Cref{thm:main_result} with $c = C_5 = 0.12$, $c' = C_6 = 0.11$, $c'' = C_7 = 0.92$, with bias $C_1 = -1.205$, $C_2 = 150$, $C_3 = 0.8$, and $C_4 = 0.5$.
    The left panel shows the frontier across the range of $\Delta$, 
    while the right panel zooms in on \( \Delta \in [0, 0.3] \).
    The fitted curve is intentionally chosen to lower bound the larger models. That is, it fits to the small model curve, 
    then uses it to extrapolate the larger model curve; 
    we do so to exhibit one possible way to fit the curves since our training procedure was not optimized for the larger models and, as such, 
    our empirical Pareto frontier is likely not optimal.
    We show another possible fit in \Cref{fig:fit1} below.
}
    \label{fig:fit2}
\end{figure}

\begin{figure}[h!]
    \centering
    \includegraphics[width=0.95\linewidth]{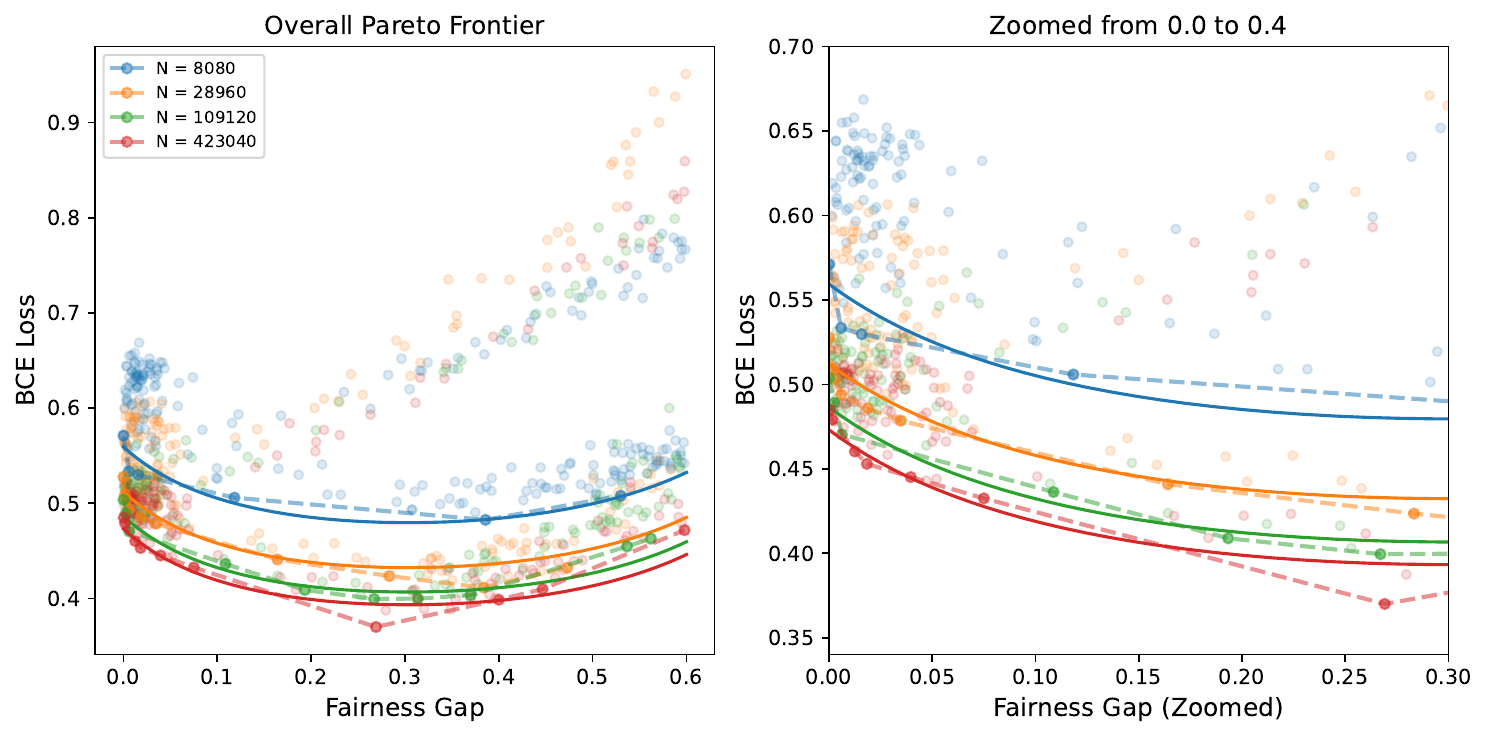}
    \caption{\small 
    Pareto frontier for different model sizes under (\( \pi = 0.4 \)) 
    and bias strength \( \zeta = 2 \). Each point corresponds to a trained model with a different 
    fairness regularization weight \( \lambda \). The dashed lines show the empirical Pareto frontier, created by finding the 
    lower convex hull of all the points. The solid lines show fitted curves to the points on the Pareto frontier 
    using \Cref{thm:main_result} with $c = C_5 = 0.08$, $c' = C_6 = 0.43$, $c'' = C_7 = 0.85$, with bias $C_1 = 0.195$, $C_2 = 9$, $C_3 = 0.5$, and $C_4 = 0.5$.
    The left panel shows the frontier across the range of $\Delta$, 
    while the right panel zooms in on \( \Delta \in [0, 0.3] \).
    The fitted curve mimics the empirical data well though it is imperfect. 
    We found that there were many possible fits, depending on the precise choice.
    We show another possible fit in \Cref{fig:fit22} below.
}
    \label{fig:fit21}
\end{figure}

\begin{figure}[h!]
    \centering
    \includegraphics[width=0.95\linewidth]{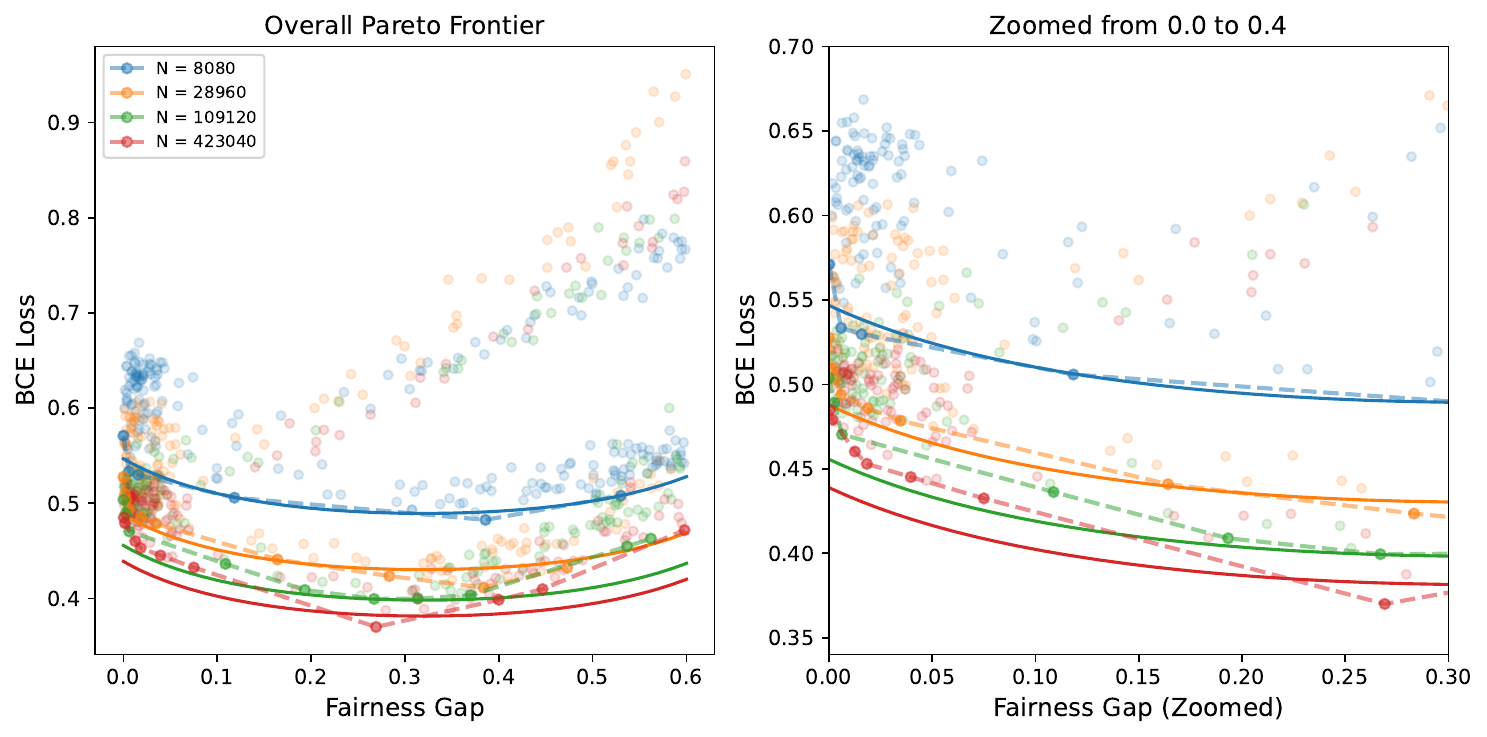}
    \caption{\small 
    Pareto frontier for different model sizes under (\( \pi = 0.4 \)) 
    and bias strength \( \zeta = 2 \). Each point corresponds to a trained model with a different 
    fairness regularization weight \( \lambda \). The dashed lines show the empirical Pareto frontier, created by finding the 
    lower convex hull of all the points. The solid lines show fitted curves to the points on the Pareto frontier 
    using \Cref{thm:main_result} with $c = C_5 = 0.06$, $c' = C_6 = C_7 = 0.5$, $c'' = C_7 = 0.82$, with bias $C_1 = 0.18$, $C_2 = 11.25$, $C_3 = 0.8$, and $C_4 = 0.5$.
    The left panel shows the frontier across the range of $\Delta$, 
    while the right panel zooms in on \( \Delta \in [0, 0.3] \).
    The fitted curve is intentionally chosen to lower bound the larger models. That is, it fits to the small model curve, 
    then uses it to extrapolate the larger model curve; 
    we do so to exhibit one possible way to fit the curves since our training procedure was not optimized for the larger models and, as such, 
    our empirical Pareto frontier is likely not optimal.
    We show another possible fit in \Cref{fig:fit21}.
}
    \label{fig:fit22}
\end{figure}

\end{document}